\documentclass[11pt]{article}

\usepackage{etoolbox}

\usepackage[utf8]{inputenc} %
\usepackage[T1]{fontenc}    %
\usepackage{url}            %
\usepackage{booktabs}       %
\usepackage{amsfonts}       %
\usepackage{nicefrac}       %
\usepackage{microtype}      %
\usepackage{xspace}
\usepackage{complexity}
\usepackage{enumitem}

\usepackage{natbib}

\usepackage{algpseudocode,algorithm,algorithmicx}
\usepackage{verbatim}
\usepackage{inconsolata}

\usepackage{setspace}
\usepackage{subfigure}
\usepackage{amsmath,bm}

\usepackage{optidef}

\usepackage[letterpaper, left=1in, right=1in, top=1in, bottom=1in]{geometry}

\usepackage[dvipsnames]{xcolor}
\usepackage[colorlinks=true, linkcolor=blue!70!black, citecolor=blue!70!black,urlcolor=black,breaklinks=true]{hyperref}
\usepackage{microtype}

\usepackage{algorithm}



 \usepackage{natbib}
 \bibliographystyle{plainnat}
 \bibpunct{(}{)}{;}{a}{,}{,}

\usepackage{amsthm}
\usepackage{mathtools}
\usepackage{amsmath}
\usepackage{bbm}
\usepackage{amsfonts}
\usepackage{amssymb}
\let\vec\undefined

\usepackage{xpatch}

\theoremstyle{definition}  %

\newtheorem{lemma}{Lemma}[section]
\newtheorem{conjecture}{Conjecture}
\newtheorem{corollary}{Corollary}[section]
\newtheorem{proposition}{Proposition}[section]
\newtheorem{fact}{Fact}
\newtheorem{assumption}{Assumption}

\theoremstyle{plain}
\newtheorem{question}{Question}
\newtheorem{remark}{Remark}

\newtheorem{theorem}{Theorem}[section]
\newtheorem{definition}{Definition}[section]

\xpatchcmd{\proof}{\itshape}{\normalfont\proofnameformat}{}{}
\newcommand{\proofnameformat}{\bfseries}

\makeatletter
\newcommand{\neutralize}[1]{\expandafter\let\csname c@#1\endcsname\count@}
\makeatother

\usepackage{prettyref}

\newcommand{\savehyperref}[2]{\texorpdfstring{\hyperref[#1]{#2}}{#2}}
\newrefformat{eq}{\savehyperref{#1}{\textup{(\ref*{#1})}}}
\newrefformat{eqn}{\savehyperref{#1}{Equation~\ref*{#1}}}
\newrefformat{lem}{\savehyperref{#1}{Lemma~\ref*{#1}}}
\newrefformat{def}{\savehyperref{#1}{Definition~\ref*{#1}}}
\newrefformat{line}{\savehyperref{#1}{line~\ref*{#1}}}
\newrefformat{thm}{\savehyperref{#1}{Theorem~\ref*{#1}}}
\newrefformat{corr}{\savehyperref{#1}{Corollary~\ref*{#1}}}
\newrefformat{cor}{\savehyperref{#1}{Corollary~\ref*{#1}}}
\newrefformat{sec}{\savehyperref{#1}{Section~\ref*{#1}}}
\newrefformat{app}{\savehyperref{#1}{Appendix~\ref*{#1}}}
\newrefformat{ass}{\savehyperref{#1}{Assumption~\ref*{#1}}}
\newrefformat{asm}{\savehyperref{#1}{Assumption~\ref*{#1}}}
\newrefformat{ex}{\savehyperref{#1}{Example~\ref*{#1}}}
\newrefformat{fig}{\savehyperref{#1}{Figure~\ref*{#1}}}
\newrefformat{alg}{\savehyperref{#1}{Algorithm~\ref*{#1}}}
\newrefformat{rem}{\savehyperref{#1}{Remark~\ref*{#1}}}
\newrefformat{conj}{\savehyperref{#1}{Conjecture~\ref*{#1}}}
\newrefformat{prop}{\savehyperref{#1}{Proposition~\ref*{#1}}}
\newrefformat{proto}{\savehyperref{#1}{Protocol~\ref*{#1}}}
\newrefformat{prob}{\savehyperref{#1}{Problem~\ref*{#1}}}
\newrefformat{claim}{\savehyperref{#1}{Claim~\ref*{#1}}}
\newrefformat{op}{\savehyperref{#1}{Open Problem~\ref*{#1}}}

\usepackage[%
linewidth=2pt,
linecolor=gray,
middlelinecolor= black,
middlelinewidth=0.4pt,
roundcorner=1pt,
topline = false,
rightline = false,
bottomline = false,
rightmargin=0pt,
skipabove=0pt,
skipbelow=0pt,
leftmargin=0pt,
innerleftmargin=4pt,
innerrightmargin=0pt,
innertopmargin=0pt,
innerbottommargin=0pt,
]{mdframed}

\newenvironment{nproblem}[1][\unskip]{%
\medskip
\begin{mdframed}
\noindent
\textbf{\underline{$#1$ Problem.}} \\
\noindent}
{\end{mdframed}}

\usepackage[scaled=.90]{helvet}
\usepackage{xspace}
\usepackage{complexity}

\newcommand{\gdmax}{\textsc{GradientDescentMax}}

\newcommand{\CLS}{\ComplexityFont{CLS}}

\newcommand{\FIXP}{\ComplexityFont{FIXP}}
\let\vec\undefined
\newcommand{\abs}[1]{|#1|}
\newcommand{\vec}[1]{\bm{#1}}
\newcommand{\norm}[1]{\left\| #1 \right\|}

\newcommand{\tmax}{\textup{\textbf{MAXMIN}}}
\newcommand{\tmin}{\textup{\textbf{MINMAX}}}
\newcommand{\tlocmax}{\textup{\textbf{LocMAXMIN}}}
\newcommand{\tlocmin}{\textup{\textbf{LocMINMAX}}}
\newcommand{\calA}{\ensuremath{\mathcal{A}}}
\newcommand{\calB}{\ensuremath{\mathcal{B}}}
\newcommand{\calC}{\ensuremath{\mathcal{C}}}

\newcommand{\calE}{\ensuremath{\mathcal{E}}}
\newcommand{\calF}{\ensuremath{\mathcal{F}}}

\newcommand{\calM}{\ensuremath{\mathcal{M}}}
\newcommand{\calN}{\ensuremath{\mathcal{N}}}

\newcommand{\calR}{\ensuremath{\mathcal{R}}}

\newcommand{\calX}{\ensuremath{\mathcal{X}}}
\newcommand{\calY}{\ensuremath{\mathcal{Y}}}

\renewcommand{\R}{\mathbb{R}}
\newcommand{\N}{\mathbb{N}}
\renewcommand{\E}{\mathbb{E}}
\newcommand{\defeq}{\coloneqq}
\newcommand{\prox}{\mathrm{prox}}


\usepackage{cleveref}

\Crefname{conjecture}{Conjecture}{Conjectures}
\crefname{assumption}{assumption}{assumptions}%
\crefname{question}{question}{questions}

\newtheorem*{theorem*}{Theorem}
\newtheorem*{proposition*}{Proposition}

\newcommand{\vx}{\vec{x}}
\newcommand{\vy}{\vec{y}}
\newcommand{\ay}{\vec{y}'}
\newcommand{\vz}{\vec{z}}

\newcommand{\opt}{\textrm{opt}}

\usepackage{authblk}

\author[1]{Ioannis Anagnostides}
\author[2]{Fivos Kalogiannis}
\author[2]{Ioannis Panageas}
\author[3]{Emmanouil-Vasileios Vlatakis-Gkaragkounis}
\author[1]{Stephen McAleer}

\affil[1]{Carnegie Mellon University}
\affil[2]{University of California, Irvine}
\affil[3]{University of California, Berkeley}


\title{Algorithms and Complexity for Computing Nash Equilibria in Adversarial Team Games\footnote{Correspondence to \texttt{ipanagea@ics.uci.edu}.}}

\begin{document}
\date{}
\maketitle
\pagenumbering{gobble}
\begin{abstract}
    \emph{Adversarial team games} model multiplayer strategic interactions in which a team of identically-interested players is competing against an adversarial player in a zero-sum game. Such games capture many well-studied settings in game theory, such as congestion games, but go well-beyond to environments wherein the cooperation of one team---in the absence of explicit communication---is obstructed by competing entities; the latter setting remains poorly understood despite its numerous applications. Since the seminal work of Von Stengel and Koller (GEB `97), different solution concepts have received attention from an algorithmic standpoint. Yet, the complexity of the standard Nash equilibrium has remained open.

    In this paper, we settle this question by showing that computing a Nash equilibrium in adversarial team games belongs to the class \emph{continuous local search ($\CLS$)}, thereby establishing $\CLS$-completeness by virtue of the recent $\CLS$-hardness result of Rubinstein and Babichenko (STOC `21) in potential games. To do so, we leverage linear programming duality to prove that any $\epsilon$-approximate stationary strategy for the team can be extended in polynomial time to an $O(\epsilon)$-approximate Nash equilibrium, where the $O(\cdot)$ notation suppresses polynomial factors. As a consequence, we show that the \emph{Moreau envelope} of a suitable best response function acts as a \emph{potential} under certain natural gradient-based dynamics.

\end{abstract}

\clearpage
\pagenumbering{arabic}
\section{Introduction}

Computing \emph{Nash equilibria} has served as one of the most influential problems in the development of algorithmic game theory, with a myriad of applications in multi-agent systems. In the special case of \emph{two-player zero-sum} games~\citep{Von07:Theory}, computing a Nash equilibrium is known to be equivalent to linear programming, thereby admitting efficient algorithms~\citep{Adler13:The}. In stark contrast, the problem becomes computationally intractable in general games~\citep{Daskalakis09:The,Chen09:Settling}, a hardness that persists even under approximate equilibrium concepts~\citep{Rubinstein16:Settling,Daskalakis13:On}. As a result, a central research endeavor is to identify and characterize classes of games that elude the aforedescribed intractability barriers.

In this paper, we study \emph{adversarial team games}, a fundamental and well-studied setting in economics in which a team of identically-interested players is competing against an adversary in a zero-sum game. 
Such games capture many important applications, including competition between firms, decision-making in distributed computer systems, diplomacy between political entities, and biological systems; \emph{e.g.}, see \citep{Coase37:The,Marschak55:Elements,Tomoko02:Near,SchulmanV19}. Furthermore, that class incorporates as special cases two-player zero-sum games and \emph{potential games}, two of the most well-studied classes of games in the literature, but goes well-beyond to settings that feature both competing and shared interests. In those environments, it is understood that if players have the capacity to communicate without any restrictions, the game reduces to a standard two-player zero-sum interaction. However, coordination between the members of the team can be difficult or even infeasible to achieve in many applications, an impediment recognized as one of the major challenges in group decision-making~\citep{Marschak55:Elements,VonStengel97:Team,SchulmanV19}. \textcolor{black}{ Indeed, the lack of coordination has long been understood as one of the key driving forces in establishing the composition and size of a firm~\citep{Coase37:The}. Beyond the ubiquitous presence of such applications in the economics literature, further examples arise in biology concerning the dynamics of evolving species under the so-called \emph{weak selection} model~\citep{Chastain14:Algorithms,Ohta02:Near}.}

From an algorithmic standpoint, commencing from the pioneering work of~\citet{VonStengel97:Team}, the complexity of different equilibrium concepts in adversarial team games has received extensive attention in recent years (see our overview in \Cref{sec:related}). Yet, surprisingly, the complexity of the standard Nash equilibrium has eluded prior investigations, bringing us to our central question: 

\begin{question}
    \label{qes:Nash}
    What is the complexity of computing Nash equilibria in adversarial team games?
\end{question}

\subsection{Our Contributions}

Our primary contribution is to establish that computing an approximate Nash equilibrium in adversarial team games lies in \emph{continuous local search (\CLS)}, a complexity class introduced by~\citet{Daskalakis11:Continuous} that has received extensive attention in recent years~\citep{Hubacek20:Hardness,Fearnley20:Unique,Goos18:Adventures,Daskalakis18:Converse,Fearnley17:Cls,Gartner18:Arrival,Goos22:Further,Goos22:Seperations,Tewolde23:Computational}, culminating in the recent breakthrough result of~\citet{Fearnley21:The}. \textcolor{black}{Thus, when combined with the recent $\CLS$-hardness result of~\citet{babichenko2021settling} for identical-interest games---a special case of adversarial team games in which the adversary has no effect on the game}, we settle \Cref{qes:Nash}:

\begin{theorem*}
    Computing an $\epsilon$-approximate Nash equilibrium in adversarial team games is $\CLS$-complete.
\end{theorem*}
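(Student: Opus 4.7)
Hardness is inherited almost for free. A potential game $(N, \phi)$ embeds as an adversarial team game in which the adversary has a singleton strategy set and the team's shared utility equals the potential $\phi$. Every Nash equilibrium of this embedded game is a Nash equilibrium of the original potential game, so the $\CLS$-hardness result of~\citet{babichenko2021settling} lifts verbatim to our setting (cf.\ \Cref{sec:advcong}).

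\textbf{Membership: choice of potential.} For the containment direction, the plan is to design an efficiently computable smooth function $\Phi$ on the team's product strategy space $\calX$ that can serve as a potential for gradient-descent dynamics. Following the blueprint outlined in the abstract, I would take $\Phi$ to be (a variant of) the Moreau envelope of the team's worst-case loss $V(\vx) \defeq \max_{\vy} [-u(\vx,\vy)]$, namely
\[
\Phi(\vx) \defeq \min_{\vx' \in \calX} \Bigl\{ V(\vx') + \tfrac{1}{2\lambda}\|\vx - \vx'\|^2 \Bigr\},
\]
for an appropriately small regularization parameter $\lambda$. Since $u$ is multilinear in the team's mixed strategies and the adversary's best-response problem is an LP, $V$ is Lipschitz on $\calX$ and $\Phi$ is differentiable with a polynomially bounded Lipschitz gradient; moreover, $\nabla \Phi$ can be evaluated in polynomial time by solving the inner strongly convex program together with an LP for the inner $\max_{\vy}$. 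Given such a $\Phi$, computing an $\epsilon$-approximate stationary point falls under the $\CLS$-complete Gradient Descent problem of~\citet{Fearnley21:The}, establishing membership modulo an extension step.

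\textbf{From stationarity to Nash via LP duality.} The technical crux is to convert an approximate stationary point of $\Phi$ into an approximate Nash. Starting from an $\epsilon$-stationary point $\vx^\star$ of $\Phi$, the proximal point $\tilde{\vx}$ attaining the inner minimum is an $O(\epsilon)$-approximate stationary point of $V$ on $\calX$. To obtain a matching adversary strategy I would invoke LP duality on the adversary's best-response program $\max_{\vy} -u(\tilde{\vx},\vy)$; its dual certificate provides a pricing of unilateral deviations that can be correlated across team members. First-order stationarity of $\tilde{\vx}$ together with complementary slackness in this dual LP should force each team member's best-response gap at $\tilde{\vx}$ to be $O(\epsilon)$, while the adversary's best-response gap is zero by construction. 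Taking $\vy^\star$ to be the primal LP optimum then yields an $O(\epsilon)$-approximate Nash equilibrium $(\tilde{\vx}, \vy^\star)$.

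\textbf{Main obstacle.} The delicate step is precisely the last: bridging \emph{joint} stationarity of the team's mixed strategy profile with the $n$ \emph{individual} deviation constraints that Nash requires. Stationarity is a first-order condition on a single scalar potential, whereas Nash is a collection of coordinate-wise optimality conditions, one per team member. Their equivalence hinges on the multilinear structure of $u$ in the team's marginals together with a careful LP-duality argument that reads off a \emph{single} adversary strategy witnessing near-optimality against every team member's deviation direction simultaneously. Ensuring that the approximation error degrades by only polynomial factors as it propagates through the Moreau smoothing, the proximal-point extraction, and the dual rounding is where I expect the analysis to be most intricate.
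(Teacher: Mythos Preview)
Your high-level plan mirrors the paper's: hardness from \citet{babichenko2021settling}, membership via a Moreau-envelope potential together with an LP-duality extension step. But the LP you propose to dualize is not the one that works, and this is precisely the gap at the crux you yourself flag.

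Dualizing the adversary's best-response program $\max_{\vy} -u(\tilde{\vx},\vy)$ for a \emph{fixed} team profile $\tilde{\vx}$ gives you nothing about team-member deviations: that LP has a single scalar constraint per adversary action and its dual variables carry no per-player structure. The paper instead constructs a primal LP whose variables include a fresh strategy $\vx_i$ for each team member,
\[
\min\; nu \quad\text{s.t.}\quad nu \ge \sum_{i=1}^n U_b(\vx_i,\vx^*_{-i})\ \ \forall b,\qquad \vx_i\in\Delta(\calA_i),
\]
so that the dual introduces one variable $z_i$ per team player, constrained by $z_i \le \sum_b y(b)\,U_b(a,\vx^*_{-i})$ for all $a\in\calA_i$. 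Thus $z_i$ is exactly player~$i$'s worst payoff against adversary mixture $\vy$ when unilaterally deviating. Near-stationarity of $\vx^*$ is used---via a gradient-domination argument on the \emph{separable} function $\sum_i U(\vx_i,\hat\vx_{-i},\vy)$---to show this LP's value $nu^{\opt}$ is within $O(\epsilon)$ of $nu^*$, where $u^*=\max_{\vy}U(\vx^*,\vy)$. Strong duality then gives $\sum_i z_i^{\opt}=nu^{\opt}\ge nu^*-O(\epsilon)$, and since each $z_i^{\opt}\le u^*$ trivially, every $z_i^{\opt}\ge u^*-O(\epsilon)$. That is the per-player certificate; complementary slackness on the naive best-response LP does not deliver it.

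Two smaller points. First, the paper extends $\vx^*$ itself (not the proximal point $\tilde\vx$) to a Nash pair; this avoids having to output the proximal point, which is only computed approximately. Second, the paper does not reduce directly to the $\CLS$-complete gradient-descent problem; it shows $\PLS$ membership via the local-search abstraction and combines with $\PPAD$ membership through $\CLS=\PPAD\cap\PLS$. The Moreau envelope is not evaluated exactly---an $\epsilon$-approximate proximal computation (via the ellipsoid method with a subgradient oracle) suffices, and the approximation is threaded through the potential-decrease argument.
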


We remark that, as is standard, the above theorem applies in the regime where $\log(1/\epsilon)$ is a polynomial; we do not address the complexity of computing an \emph{exact} Nash equilibrium, which is left as an interesting open problem (see \Cref{conj:FIXP}).

Given that computing Nash equilibria is well-known to be in $\PPAD$~\citep{Daskalakis09:The,papadimitriou1994complexity}, membership in $\CLS$ reduces to showing that the problem is in the class \emph{polynomial local search ($\PLS$)}~\citep{johnson1988easy} by virtue of the recent collapse $\CLS = \PPAD \cap \PLS$~\citep{Fearnley21:The}. To do so, the starting point is the fundamental \emph{extendibility} of team-maxmin equilibria shown by~\citet{VonStengel97:Team}; namely, \emph{any team-maxmin profile can be extended to an equilibrium of the game}. Unfortunately, computing even an approximate team-maxmin profile is computationally prohibitive~\citep{Borgs10:The,Hansen08:Approximability},\footnote{\textcolor{black}{\citet{Borgs10:The} did not phrase their hardness result explicitly in terms of computing a team-maxmin equilbirum, but there is an immediate equivalence between computing the so-called \emph{threat point} in the context of the folk theorem and team-maxmin equilibria.} } being $\FNP$-hard~\citep{Celli18:Computational}, thereby necessitating a more refined approach.

In light of this, one of our main insights is to relax the requirement of the extendibility result of~\citet{VonStengel97:Team}. To be more precise, let us introduce some basic notation about our setting; the formal description is deferred to~\Cref{sec:prels}. For an adversarial team game $\Gamma$, we let $\calX$ be the joint strategy space of the team and $\calY$ be the strategy space of the adversary. We further suppose that $U : \calX \times \calY \to \R$ is the (mixed extension) of the utility function. We will also write $\poly(\Gamma)$ for factors that are polynomial in the natural parameters of the game. In this context, we show the following central characterization.

\begin{proposition*}[Formal Version in \Cref{thm:onegeneralize}]
    Let $\vx^* \in \calX $ be an $\epsilon$-near stationary point of the function $\vx \mapsto \max_{\vy \in \calY} U(\vx, \vy)$. Then, there exists $\vy^* \in \calY$ such that $(\vx^*, \vy^*)$ is an $(\epsilon \cdot \poly(\Gamma))$-approximate Nash equilibrium of $\Gamma$. Further, $\vy^*$ can be computed in polynomial time.
\end{proposition*}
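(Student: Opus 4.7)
My plan is to lift Von Stengel and Koller's exact extendibility result into an algorithmic, approximate version via linear programming duality. Writing $\phi(\vx) \coloneqq \max_{\vy \in \calY} U(\vx, \vy)$ and taking $\vx^*$ as given, I formulate the search for $\vy^*$ as a linear feasibility problem over $\vy \in \calY$: require (a) $U(\vx^*, \vy) \geq \phi(\vx^*) - \delta$, so that $\vy$ is a $\delta$-best response of the adversary to $\vx^*$, and (b) $U(e_{s_i}, \vx_{-i}^*, \vy) \geq U(\vx_i^*, \vx_{-i}^*, \vy) - \delta$ for every team player $i$ and every pure deviation $s_i$, so that $\vx_i^*$ is a $\delta$-best response of player $i$ against $(\vx_{-i}^*, \vy)$. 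These constraints are linear in $\vy$ because $U$ is multilinear and $\vx^*$ is fixed; any feasible $\vy$ immediately yields an $O(\delta)$-approximate Nash equilibrium $(\vx^*, \vy)$ and is computable in polynomial time by any LP solver.

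The crux is then to show that this LP is feasible with $\delta = O(\epsilon \cdot \poly(\Gamma))$ whenever $\vx^*$ is $\epsilon$-near stationary for $\phi$. I would argue by contradiction via Farkas' lemma: infeasibility produces nonnegative multipliers $\mu_k$ on the adversary-side constraints of (a) and $\nu_{i,s_i}$ on the team-side constraints of (b) that aggregate into a direction $\vd = (\vd_1, \ldots, \vd_n)$ tangent to $\calX$ at $\vx^*$, with each $\vd_i$ a convex combination of the increments $e_{s_i} - \vx_i^*$ weighted by $\nu_{i,s_i}$, together with an adversary mixture $\vy_\mu \in \calY$ supported on the Danskin-active pure best responses to $\vx^*$. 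The infeasibility certificate implies $\langle \nabla_\vx U(\vx^*, \vy'), \vd\rangle \leq -\Omega(\delta)$ uniformly over all $\vy'$ in an $O(\epsilon)$-neighborhood of the adversary's best-response set, which gives $\phi(\vx^* + t\vd) \leq \phi(\vx^*) - \Omega(\delta\, t)$ for sufficiently small $t > 0$. Choosing $\delta$ larger than $\epsilon$ times the natural smoothness and Lipschitz constants of $U$ then contradicts near-stationarity of $\vx^*$, which one can express equivalently as the smallness of the gradient of the Moreau envelope of $\phi$, as referenced in the abstract.

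The main obstacle, I expect, is making this Farkas-to-descent translation quantitatively tight. One must normalize the LP so that the dual multipliers are bounded by $\poly(\Gamma)$ factors; reconcile the nonsmooth Clarke subdifferential of $\phi$ with the smooth gradient of its Moreau envelope used to define stationarity; and ensure that descent of $\phi$ along $\vd$ survives at a discrete step rather than only infinitesimally (which requires carefully tracking how the set of active best responses can change under small perturbations of $\vx$). Once these polynomial bounds are secured, the whole pipeline---form the LP from $\vx^*$, solve it, and read off $\vy^*$---runs in polynomial time, giving the stated $(\epsilon \cdot \poly(\Gamma))$-approximate Nash equilibrium.
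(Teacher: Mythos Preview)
Your plan is sound and would go through, but it takes a different route from the paper. The paper does \emph{not} write down the Nash-equilibrium constraints as a feasibility LP in $\vy$ and then argue by Farkas. Instead, it sets up a specific primal LP over $(u,\vx)$ whose constraints aggregate the single-player deviations, $nu \geq \sum_{i} U_b(\vx_i,\vx^*_{-i})$ for each $b\in\calB$, and then takes the dual: the dual variables are exactly $(\vz,\vy)$ with $\vy\in\Delta(\calB)$, and the optimal dual $\vy^{\opt}$ is the desired extension. The quantitative heart of the paper's argument is to sandwich the primal values $u^{\opt}\leq u^*\leq u^{\opt}+O(\epsilon)$; the upper bound comes from moving to the proximal point $\hat\vx$ (via \Cref{lemma:close-prox}) and invoking a gradient-domination property (Lemma~12 of Daskalakis et al.\ and the auxiliary \Cref{lem:moreaulinear}) on the linearized function $\vx\mapsto\max_\vy\sum_i U(\vx_i,\hat\vx_{-i},\vy)$. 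Strong duality then pins each $z_i^{\opt}$ to within $O(\epsilon)$ of $u^*$, which is exactly the per-player best-response guarantee. Compared to your Farkas-to-descent route, the paper's approach is fully constructive (read $\vy^*$ directly off the dual) and sidesteps the Clarke-versus-Moreau reconciliation you flag as an obstacle, at the cost of importing the gradient-domination lemma as a black box. Your approach, by contrast, is more self-contained on the optimization side and makes the role of near-stationarity more transparent---the Farkas certificate literally \emph{is} a descent direction---but you would still need to carry out the quantitative bridge you identify (the Farkas inequality in fact holds for \emph{all} $\vy\in\calY$, not just active best responses, which makes the finite-step descent bound cleaner than you may expect).
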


This characterization is established via linear programming duality, similarly to the result of~\citet{VonStengel97:Team}. However, unlike the extension of~\citet{VonStengel97:Team}, the proposition above is based on an approximate stationary point $\vx^*$, which, crucially, can be efficiently computed.\footnote{\textcolor{black}{ It is worth noting here that the proof of~\citet{VonStengel97:Team} works more generally for any strategy profile that is unimprovable through local deviations with respect to \emph{the entire strategy profile}, as pointed out explicitly by the authors. However, it is also mentioned that the proof does not carry over when one instead considers \emph{unilateral} deviations. It is unclear how one can use the local notion of~\citet{VonStengel97:Team}, as described earlier, for computational purposes since it is in general \NP-hard to compute---the natural approach consisting of discretizing the domain and applying some ``local'' method is problematic because the natural neighborhood now has a prohibitively large size. On the other hand, and crucially for our paper, computing a stationary point of a differentiable function that is in \CLS.} } Indeed, we analyze a natural gradient-based algorithm (\Cref{alg:gdmax}), which additionally incorporates a suitable linear program that performs the extension from $\vx^*$ to $(\vx^*, \vy^*)$. Importantly, we show in \Cref{theorem:pot} that there exists a polynomially computable potential function---namely, the approximate value of the \emph{Moreau envelope} (\Cref{def:moreau}) of $\vx \mapsto \max_{\vy \in \calY} U(\vx, \vy)$. Along with the fact that each iteration of \Cref{alg:gdmax} can be implemented in polynomial time, this suffices for showing membership in $\PLS$, modulo some further technical issues that we address. \textcolor{black}{As a byproduct, our algorithm also leads to a fully polynomial-time approximation scheme for computing $\epsilon$-approximate Nash equilibria in adversarial team games (\Cref{cor:fptas}), under the standard assumption that utilities lie in $[0, 1]$. Consequently, from an economic standpoint, a key message of our work is that---under the complexity landscape as currently understood---computing Nash equilibria in adversarial team games is easier than in general games. And perhaps more importantly, Nash equilibria in adversarial team games arise as the limit point of natural gradient-based learning dynamics.}

From a broader viewpoint, we find it surprising that computing Nash equilibria in adversarial team games has eluded prior research, although it subsumes many settings that have received tremendous interest in algorithmic game theory, such as congestion games. But, importantly, our setting also covers more uncharted territory; one interesting such example, discussed in \Cref{sec:advcong}, is that of congestion games with \emph{adversarial costs} (\emph{cf}. \citet{Babaioff09:Congestion}). We hope that our results will encourage further research in those directions. Indeed, as we highlight in \Cref{sec:conclusion}, there are still many exciting open problems. Perhaps most notably, while our paper focuses on the canonical setting where the team is facing a single adversary, we leave as a challenging open question the complexity of the more general problem in which both teams have multiple players~\citep{SchulmanV19}.

\subsection{Further Related Work}
\label{sec:related}

The study of team games from an algorithm standpoint dates back at least to the pioneering work of~\citet{VonStengel97:Team}. In particular, they introduced the \emph{team-maxmin equilibrium (TME)} for games wherein a team of players---with identical payoffs---faces a single adversary. A TME profile prescribes a mixed strategy for each team member so that the minimal expected team payoff over all possible responses of the adversary (potentially knowing the play of the team) is the maximum possible. That is, in a TME profile the players of the team guarantee the best possible payoff, given the lack of coordination. Naturally, if enough communication is added between the team members, they would be able to coordinate all their actions, and the team-maxmin payoff would be identical to the value of the game when viewed as a two-player zero-sum game. However, prior work~\citep{VonStengel97:Team,SchulmanV19} has provided ample motivation for studying team games with no coordination or communication. In other words, each player can only use a separate mixed (randomized) strategy, but correlated randomization (beyond the structure of the game) is not allowed.

TME enjoy a number of desirable properties. For one, a team-maxmin equilibrium has a \emph{unique value}, a compelling antidote to the equilibrium selection problem in general games. Further, that value is the optimal payoff that individual players can guarantee: no equilibrium of the game can be better for the team than a team-maxmin equilibrium~\citep{VonStengel97:Team}, while the worst equilibrium can be far from a TME in terms of efficiency~\citep{Basilico17:Team}. Unfortunately, computing a TME suffers from computational intractability, being \FNP-hard~\citep{Hansen08:Approximability,Borgs10:The}. Nevertheless, several methods have been designed to tackle the problem in practice~\citep{Converging20:Zhang,Zhang20:Computing,Basilico17:Team}. 

A different avenue for addressing the intractability of TME that has received attention in recent years is to incorporate some correlation device, thereby allowing players to coordinate \emph{ex ante}~\citep{Zhang21:Computing,Farina18:Ex,Farina21:Connecting,Zhang22:Team,Zhang21:Team,Celli18:Computational,Celli19:Coordination,Carminati22:A}. Nevertheless, even with a correlation device, the problem is $\NP$-hard~\citep{Chu01:On}, although it has been shown that efficient algorithms exist as long as the asymmetry between the players' information is limited~\citep{Zhang21:Computing}. Further, it is worth noting that team equilibria are also useful for extensive-form two-player zero-sum games where one of the players has \emph{imperfect recall}~\citep{Kaneko95:Behavior,Piccione97:On}. 

Moreover, in a subsequent work, \citet{Kalogiannis22:Efficiently} extended our techniques to obtain an $\FPTAS$ for computing Nash equilibria in the more general class of adversarial team \emph{Markov} games. We also refer to the works of~\citet{Kartik22:Common} and~\citet{Lagoudakis02:Learning}, as well as the many references therein, for other pertinent considerations. Finally, we note that team games have also been recently analyzed from a mean-field perspective~\citep{Guan23:Zero}.

\section{Preliminaries}
\label{sec:prels}

In this section, we introduce the necessary background on team games and optimization theory. 

\subsection{Two-Team Zero-Sum Games} 
\label{sec:teams}

A \emph{two-team zero-sum game},\footnote{While our main focus is on the case where the adversary team has a single player, commonly referred to as adversarial team games in the literature, we present the more general setting here so as to smoothly discuss future directions in \Cref{sec:advcong,sec:conclusion}.} represented in normal form, is defined by a tuple $\Gamma(\calN, \calM,\calA,
\calB,U)$.\footnote{For notational convenience, we will sometimes use the notation $\Gamma(n, U)$ to denote an adversarial team game with $n$ players (minimizers) in team $A$ and a single adversary ($m = 1$) .} $\Gamma$ consists of a finite set of $n = |\calN|$ \emph{players} belonging to team $A$, and a finite set of $m = |\calM|$ \emph{players} belonging to team $B$. Each player from team $A$ has a finite and nonempty set of available \emph{actions} $\calA_i$, so that $\calA \defeq \prod_{i=1}^n \calA_i$ denotes the ensemble of all possible action profiles of team $A$. Similarly, each player from team $B$ has a finite and nonempty set of actions $\calB_j$ per player $j \in \calM$. We will denote by $\vec{a}=(a_1,\ldots, a_n) \in \calA$ the action profile of team $A$, and $\vec{b}=(b_1,\ldots,b_m) \in \calB \defeq \prod_{j=1}^m \calB_j$ the action profile of team $B$. Each team's \emph{payoff} function is denoted by $U_{A},U_{B}:\calA\times\calB\to\R$, so that the \emph{individual utility} of a player is identical to its teammates: $U_{i}(\vec{a}, \vec{b}) = U_{A}(\vec{a}, \vec{b})$ and $U_{j}(\vec{a}, \vec{b}) = U_{B}(\vec{a}, \vec{b})$, for all joint action profiles $(\vec{a}, \vec{b}) \in \calA \times \calB$ and for all players $i \in \calN$ and $j  \in \calM$. Further, the team game is assumed to be zero-sum, in the sense that $U_{B}(\vec{a}, \vec{b}) =- U_{A}(\vec{a}, \vec{b}) = U(\vec{a}, \vec{b})$. As a result, players in team $B$ aim to maximize $U$---thereby referred to as maximizers, while players in team $A$ aim to minimize $U$ (hereinafter, minimizers).

To ensure existence of equilibria, players are allowed to randomize, that is, select a probability distribution over their set of actions. We define the product distributions $\vec{x} = (\vec{x}_1,\ldots, \vec{x}_n)$, $\vec{y} = (\vec{y}_1, \ldots, \vec{y}_m)$ as the joint strategies of teams $A$ and $B$, respectively, so that $\vec{x}_i \in \Delta(\calA_i)$ and $\vec{y}_j \in \Delta(\calB_j)$, for any $i \in \calN$ and $j \in \calM$. For convenience, we will write $\calX \defeq \prod_{i\in\calN}\calX_i \defeq \prod_{i\in\calN}\Delta(\calA_i)$ and $\calY \defeq \prod_{j \in \calM}\calY_j \defeq \prod_{j \in \calM} \Delta(\calB_j)$ for the space of mixed strategy profiles of teams $A$ and $B$ respectively. Finally, we overload notation so that $U : \calX \times \calY \ni (\vec{x}, \vec{y}) \mapsto \E_{(\vec{a}, \vec{b}) \sim (\vx, \vy)} U(\vec{a}, \vec{b})$.

\paragraph{Nash equilibrium.} In terms of solution concepts, our main focus is on computing \emph{$\epsilon$-approximate Nash Equilibria (NE)}, that is, a strategy profile $(\vec{x}^*,\vec{y}^*) \in \calX\times \calY$ such that for any possible unilateral deviation $\vx_i \in \calX_i$ from any player $i \in \calN$ or $\vy_j \in \calY_j$ from any player $j \in \calM$,
\begin{equation}
    \label{def:approximatenash}
\begin{array}{cc}
    U(\vec{x}^*,\vec{y}^*)\le U(\vec{x}_{i},\vec{x}^*_{-i},\vec{y}^*) +\epsilon \textrm{ and } U(\vec{x}^*,\vec{y}^*)\ge U(\vec{x}^*,\vec{y}_{j},\vec{y}^*_{-j}) - \epsilon.
    \tag{NE}\end{array}
\end{equation}
Here, we used the standard shorthand notation $\vec{x}_{-i} = (\vec{x}_1, \dots, \vec{x}_{i-1}, \vec{x}_{i+1}, \vec{x}_n) \in \prod_{i' \neq i} \Delta(\calA_{i'})$, and similarly for $\vec{y}_{-j} \in \prod_{j' \neq j} \Delta(\calB_{j'})$. The existence of such an equilibrium point is guaranteed by Nash's theorem~\citep{Nash51:Non}. We will say that the strategy profile $(\vec{x}^*,\vec{y}^*)$ is \emph{pure} if each player is selecting an action with probability $1$. 

\paragraph{Team-maxmin equilibrium.} Another natural equilibrium concept tailored to two-team zero-sum games is the \emph{team-maxmin equilibrium (TME)}, \textcolor{black}{which induces a Nash equilibrium in adversarial team games~\citep{VonStengel97:Team}.} \textcolor{black}{A joint strategy profile $(\vec{x}^*,\vec{y}^*) \in \calX \times \calY$ is a team-maxmin equilibrium if}
 \begin{equation}\label{eq:maxmin} 
 \vec{y}^* \in \arg\max_{\vec{y}\in \calY} \min_{\vec{x} \in \calX} U(\vec{x},\vec{y}) \textrm{ and } \vec{x}^* \in \arg\min_{\vec{x}\in \calX} U(\vec{x},\vec{y}^*)  \tag{Maxmin}.
 \end{equation}
 Any team-maxmin equilibrium yields the same value~\citep{VonStengel97:Team}, denoted by $U_{\textrm{maxmin}}$. Similarly, a team-minmax equilibrium $(\vec{x}^*,\vec{y}^*) \in \calX \times \calY$ satisfies
 \begin{equation}\label{eq:minmax} 
 \vec{x}^* \in \arg \min_{\vec{x} \in \calX} \max_{\vec{y} \in \calY} U(\vec{x},\vec{y}) \textrm{ and } \vec{y}^* \in \arg\max_{\vec{y}\in \calY} U(\vec{x}^*,\vec{y}). \tag{Minmax}
 \end{equation}
 
 Any team-minmax equilibrium yields the same value, namely $U_{\textrm{minmax}}$. However, unlike two-player zero-sum games, team games do not enjoy a minimax theorem. That is, while $U_{\textrm{minmax}} \geq U_{\textrm{maxmin}}$ (by weak duality), equality does not hold in general~\citep{SchulmanV19}.

\begin{remark}[Succinct representation]\label{rem:succint}
A standard issue encountered in multiplayer games is that the utility tensors describing the game have size that scales exponentially with the number of players. As is common, we will bypass this issue by focusing on multiplayer two-team zero-sum games that have the \emph{polynomial expectation property}~\citep{Papadimitriou08:Computing}, formalized below (\Cref{assumption:polexp}). This property is known to hold for most classes of succinctly representable games~\citep{Papadimitriou08:Computing}, albeit not for all~\citep{Daskalakis06:The}. \textcolor{black}{Formally, a succinct game $\Gamma$ is given implicitly through the use of an input $\mathcal{I}$. We assume throughout that the game is of \emph{polynomial type}, meaning that both the number of players $n$ and the action set of each player are polynomial in the size of the input $\mathcal{I}$; numerous such examples are given by~\citet{Papadimitriou08:Computing}. Naturally, when we say that an algorithm is polynomial, it is with respect to the size of the input $\mathcal{I}$, the succinct representation of $\Gamma$. Finally, we assume throughout that $| \mathcal{I} | \leq \poly(n, \sum_{i=1}^n |\calA_i|, |\calB|)$.}

\end{remark}

To state the next assumption below, we will use the notation $|\vec{x}|$ to represent the number of bits required to represent vector $\vec{x}$.

\begin{assumption}[Polynomial Expectation Property]
    \label{assumption:polexp}
    \textcolor{black}{Let $\Gamma$ be a game defined succinctly through input $\mathcal{I}$. For any (mixed) joint strategy profile $(\vec{x}, \vec{y}) \in \prod_{i=1}^n \Delta(\calA_i) \times \prod_{j=1}^m \Delta(\calB_j)$, we can compute (exactly) the expectation $\E_{(\vec{a}, \vec{b}) \sim (\vec{x}, \vec{y})} U(\vec{a}, \vec{b}) \eqqcolon U(\vx, \vy)$ in time $\poly(|\mathcal{I}|, |\vec{x}|, |\vec{y}|)$.}
\end{assumption}

\subsection{Basic Background on Optimization} 

First, it will be useful to note that any point $(\vec{x}^*,\vec{y}^*) \in \calX\times\calY$ that is a solution to the following \emph{variational inequality (VI)} problem
 \begin{equation}\label{def:vi}    \nabla_{\vec{x}} U(\vec{x}^*,\vec{y}^*)(\vec{x}^*-\vec{x})\leq \epsilon \; \forall \vec{x} \in \calX \textrm{ and }\nabla_{\vec{y}} U(\vec{x}^*,\vec{y}^*)(\vec{y}^*-\vec{y})\geq -\epsilon \; \forall \vec{y} \in \calY, \tag{VI}
 \end{equation}
must also be an $\epsilon$-approximate NE of the corresponding game. Further, let us define $\phi: \vx \mapsto \max_{\vy \in \calY} U(\vx, \vy)$, which is the utility function of the team when the adversary is best responding.  The following is a well-known property about $\phi$~\citep{Davis18:Stochastic}.

\begin{lemma}
    \label{lemma:Lip}
    If $U(\vx, \vy)$ is $L$-Lipschitz continuous, then the function $\vx \mapsto \max_{\vy \in \calY} U(\vx, \vy)$ is also $L$-Lipschitz continuous.
\end{lemma}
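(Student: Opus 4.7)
The plan is to use the standard two-sided argmax comparison. Define $f(\vx) \defeq \max_{\vy \in \calY} U(\vx, \vy)$, and fix arbitrary $\vx_1, \vx_2 \in \calX$. By compactness of $\calY$ and continuity of $U$, one can pick maximizers $\vy_1^* \in \arg\max_{\vy \in \calY} U(\vx_1, \vy)$ and $\vy_2^* \in \arg\max_{\vy \in \calY} U(\vx_2, \vy)$. The goal is to bound $|f(\vx_1) - f(\vx_2)|$ by $L \|\vx_1 - \vx_2\|_2$.

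To get an upper bound on $f(\vx_1) - f(\vx_2)$, I would use that $\vy_2^*$ is only suboptimal at $\vx_1$, so $f(\vx_1) \geq U(\vx_1, \vy_2^*)$, while $f(\vx_2) = U(\vx_2, \vy_2^*)$ exactly. The difference then reduces to $U(\vx_1, \vy_2^*) - U(\vx_2, \vy_2^*)$ (up to the inequality above), which is controlled by the $L$-Lipschitz continuity of $U$ in its first argument at the fixed point $\vy_2^*$. This yields $f(\vx_2) - f(\vx_1) \leq L \|\vx_1 - \vx_2\|_2$. By a fully symmetric argument using $\vy_1^*$ at $\vx_2$, the reverse inequality $f(\vx_1) - f(\vx_2) \leq L \|\vx_1 - \vx_2\|_2$ holds as well, giving the desired Lipschitz bound.

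There is no real obstacle here: the only subtlety is to be explicit about what ``$L$-Lipschitz continuous'' means for $U : \calX \times \calY \to \R$, namely that one can use it coordinate-wise in $\vx$ (which follows from joint Lipschitzness under any product norm that dominates the $\vx$-component, e.g.\ Euclidean on $\calX \times \calY$). Given that, the proof is a two-line swap-and-bound argument and does not require any further machinery.
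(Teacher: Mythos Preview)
Your argument is correct and is exactly the standard swap-and-bound proof one expects here. The paper does not actually supply its own proof of this lemma: it is introduced as one of the ``well-known properties'' and stated without proof, so there is nothing to compare against beyond noting that your approach is the canonical one.
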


On the other hand, $\phi$ is \emph{not smooth}, and so standard techniques in optimization no longer apply when using gradient descent on $\phi$. \textcolor{black}{To construct a potential function, we will instead rely on some recent developments concerning gradient descent on nonsmooth functions~\citep{lin2020gradient}.} To do so, we make use of the \emph{Moreau envelope}, introduced below; this will be crucial for our potential argument in \Cref{sec:cls}.

\begin{definition}[Moreau Envelope]
    \label{def:moreau}
    Consider any game $\Gamma$. We define the max-Moreau envelope with parameter $\lambda > 0$ as
\begin{equation*}
\phi_{\lambda}(\vec{x}) \textcolor{black}{\defeq \min_{\vec{x}'\in\R^{\sum_i |\calA_i|}} \left\{ \max_{\vec{y}\in\calY} U(\vec{x}',\vec{y}) + r(\vec{x}')+\frac{1}{2\lambda} \norm{\vec{x}-\vec{x}'}_2^2 \right\} }= \min_{\vec{x}'\in\calX} \left\{ \max_{\vec{y}\in\calY} U(\vec{x}',\vec{y}) + \frac{1}{2\lambda} \norm{\vec{x}-\vec{x}'}_2^2 \right\},
\end{equation*}
\textcolor{black}{where $r(\vec{x}') : \R^{\sum_i |\calA_i|} \to \R \cup \{\infty\}$ is zero if $\vec{x}'\in\calX$ and infinity otherwise (convex function).} 
\end{definition}
It is well-known that for a sufficiently small $\lambda$, $\phi_{\lambda}$ defined above is a differentiable function in ; \emph{e.g.}, see~\citep{Davis18:Stochastic}. In particular, if $\ell$ is a smoothness parameter of $U$, it suffices to take $\lambda \defeq \frac{1}{2\ell}$; \textcolor{black}{we recall that a continuously differentiable function $U$ is called \emph{$\ell$-smooth} if $\| \nabla U(\vx, \vy) - \nabla U(\vx', \vy') \|_2 \leq \ell \| (\vx, \vy) - (\vx', \vy') \|_2$ for any $\vx, \vx' \in \calX$ and $\vy, \vy' \in \calY$. The utility function $U$ in the context of (normal-form) games is guaranteed to be Lipschitz continuous and smooth by virtue of multilinearity (see~\Cref{lemma:Lipcon}).} We will also need the following well-known property~\citep{Davis18:Stochastic,lin2020gradient}.

\begin{lemma}
    \label{lemma:weaklyconvex}
    If $U(\vx, \vy)$ is $\ell$-smooth, then the function $\vx \mapsto \max_{\vy \in \calY} U(\vx, \vy)$ is $\ell$-weakly convex, in the sense that the function $\max_{\vy \in \calY} U(\vx, \vy) + \frac{\ell}{2}\|\vx\|_2^2$ is convex.
\end{lemma}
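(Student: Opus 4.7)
Write $f(\vx) \defeq \max_{\vy \in \calY} U(\vx, \vy)$. The plan is to exhibit $f(\vx) + \frac{\ell}{2}\|\vx\|_2^2$ as a pointwise supremum of convex functions of $\vx$, which is automatically convex.

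First I would recall the standard characterization of smoothness: if $U(\cdot, \vy)$ is $\ell$-smooth (in the sense that $\nabla_{\vx} U(\cdot, \vy)$ is $\ell$-Lipschitz continuous, uniformly in $\vy \in \calY$), then the quadratically regularized function $g_{\vy}(\vx) \defeq U(\vx, \vy) + \frac{\ell}{2}\|\vx\|_2^2$ is convex in $\vx$ for every fixed $\vy \in \calY$. This is the classical equivalence between $\ell$-Lipschitz gradients and $\ell$-weak convexity, and can be seen either from the descent-lemma style inequality $|U(\vx, \vy) - U(\vx', \vy) - \langle \nabla_{\vx} U(\vx', \vy), \vx - \vx' \rangle| \le \frac{\ell}{2}\|\vx - \vx'\|_2^2$, or by noting that adding $\frac{\ell}{2}\|\vx\|_2^2$ makes the Hessian (or any limiting subgradient) positive semidefinite.

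Next, I would swap the regularizer past the max, using that $\frac{\ell}{2}\|\vx\|_2^2$ does not depend on $\vy$:
\begin{equation*}
    f(\vx) + \frac{\ell}{2}\|\vx\|_2^2 = \max_{\vy \in \calY} \Bigl( U(\vx, \vy) + \frac{\ell}{2}\|\vx\|_2^2 \Bigr) = \max_{\vy \in \calY} g_{\vy}(\vx).
\end{equation*}
Since each $g_{\vy}$ is convex on $\calX$ by the previous step, and the pointwise supremum of an arbitrary family of convex functions is convex, it follows that $\vx \mapsto f(\vx) + \frac{\ell}{2}\|\vx\|_2^2$ is convex, which is exactly $\ell$-weak convexity of $f$.

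The only subtle point — and the part I would be most careful about — is the precise interpretation of ``$U$ is $\ell$-smooth''. The argument requires smoothness of $U(\cdot, \vy)$ in the $\vx$-variable that is uniform over $\vy \in \calY$, as opposed to joint smoothness in $(\vx, \vy)$; under the multilinear (mixed extension) structure of $U$ in this paper, joint smoothness implies the uniform marginal smoothness we need, so no new issue arises. I would just state this explicitly before invoking the equivalence in step one. Nothing else in the argument is delicate; the result is essentially the observation that ``max of weakly convex is weakly convex with the same modulus.''
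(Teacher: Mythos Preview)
Your argument is correct and is the standard route: each $g_{\vy}(\vx) = U(\vx,\vy) + \tfrac{\ell}{2}\|\vx\|_2^2$ is convex by $\ell$-smoothness of $U(\cdot,\vy)$, and the pointwise supremum over $\vy \in \calY$ preserves convexity. The paper does not actually give a proof of this lemma---it is stated as a ``well-known property'' without argument---so there is nothing to compare against beyond noting that your proof is exactly the expected one.
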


\textcolor{black}{In particular, this implies that $\phi(\vx) + \ell \|\vx\|^2_2$ is \emph{$\ell$-strongly convex} since $\frac{\ell}{2} \| \vx \|_2^2$ is $\ell$-strongly convex (by definition), a fact that we will use in the sequel.} Finally, we will need the definition of a \emph{stationary point} for the nonsmooth function $\phi: \vx \mapsto \max_{\vy \in \calY} U(\vx, \vy)$, formally introduced below through the use of the Moreau envelope.

\begin{definition}[Approximate Stationarity]
    \label{def:stat}
    We call a point $\vec{x}^* \in \calX$ an $\epsilon$-approximate stationary point of $\phi_{1/(2\ell)}$ if
 \begin{equation}
 \label{eq:stationary1} 
\textcolor{black}{\norm{\nabla \phi_{1/(2\ell)}(\vec{x}^*)}_2 \leq \epsilon}.
 \end{equation}
\end{definition}

A point $\vx^* \in \calX$ that satisfies~\eqref{eq:stationary1} will also sometimes be referred to as \emph{$\epsilon$-near stationary} point of $\phi$. The next lemma connects the closeness of a point to stationarity to the distance from its \emph{proximal point}. More precisely, we recall that the proximal point of $\vx^* \in \calX$ is defined as $\hat{\vx} \defeq \arg \min_{\vx \in \calX} \left\{ \phi(\vx) + \ell \|\vx - \vx^*\|_2^2 \right\}$, and it is indeed well-defined~\citep{lin2020gradient}.

\begin{lemma}
    \label{lemma:close-prox}
    Suppose that $\vx^* \in \calX$ is an $\epsilon$-approximate stationary point of $\phi_{1/(2\ell)}$. If
    \begin{equation*}
    \calX \ni \hat{\vx} \defeq \arg \min_{\vx \in \calX} \left\{ \phi(\vx) + \ell \|\vx - \vx^*\|_2^2 \right\}
    \end{equation*}
    is the proximal point of $\vx^*$, then \textcolor{black}{$\|\vx^* - \hat{\vx}\|_2 = \frac{\| 
\nabla \phi_{1/(2\ell)} (\vx^*) \|_2}{2\ell}\leq \frac{\epsilon}{2\ell}$}. \textcolor{black}{Further, there exists a $\vec{\xi} \in\partial \phi (\hat{\vx})$ so that
 $\norm{\vec{\xi}}_2 \leq \epsilon.$}
\end{lemma}

\noindent\textcolor{black}{For this lemma see in the works of~\citep[end of page 2]{Davis18:Stochastic}, ~\citep{lin2020gradient},~\citep[Lemma 6]{Daskalakis20:Independent} and references therein. In what follows, we will also need the following lemma.}

\begin{lemma}
    \label{lemma:Golowich}
    Let $f : \calX \times \calY \to \R$ be $L$-Lipschitz continuous and $\ell$-smooth such that the function $\vx \mapsto f(\vx, \vy)$ is linear for any $\vy \in \calY$. If $\phi : \vx \mapsto \max_{\vy \in \calY} f(\vx, \vy)$, then for all $\vx \in \calX$
    \[
        \phi(\vx) - \min_{\vx^* \in \calX} \phi(\vx^*) \leq \left( 1 + \frac{L}{2\ell} \right) \| 
\nabla \phi_{1/(2\ell)} (\vx) \|_2.
    \]
\end{lemma}
\textcolor{black}{
A more general version of this lemma appears as~\citep[Lemma A.2]{Daskalakis20:Independent}.
}

\section{The Complexity of Adversarial Team Games}
\label{sec:cls}

In this section, we 
establish our main result: computing $\epsilon$-approximate Nash equilibria in adversarial team games is complete for the class $\CLS$, as formalized in \Cref{theorem:cls}. We organize our argument as follows.
\begin{itemize}
    \item First, in \Cref{sec:dualLP} we show that an $\epsilon$-approximate stationary point $\vx^* \in \calX$ of the Moreau envelope $\phi_{1/(2\ell)}$ \emph{can be extended to an $O(\epsilon)$-approximate Nash equilibrium $(\vx^*, \vy^*) \in \calY$} (\Cref{thm:onegeneralize}). This extends a result of~\citet{VonStengel97:Team}, and it is established via strong duality.
    \item Next, in \Cref{sec:potential} we construct an explicit polynomially computable potential function (\Cref{theorem:pot}) using gradient descent on a suitable function, along with the extendibility argument of \Cref{thm:onegeneralize}.
\end{itemize}

\subsection{From a Stationary Point to a Nash Equilibrium via Strong Duality}
\label{sec:dualLP}

Suppose that $\vx^* \in \calX$ is an $\epsilon$-approximate stationary point of $\phi_{1/(2\ell)}$. We will show (in \Cref{thm:onegeneralize}) that $\vx^*$ can be extended to an $O(\epsilon)$-approximate Nash equilibrium profile. A similar in spirit extendibility argument was known from the work of~\citet{VonStengel97:Team}, but that required that $\vx^*$ is a team-maxmin profile, which is computationally prohibitive for establishing membership in $\CLS$. On the other hand, as we formalize in \Cref{sec:potential}, approximate stationary points of $\phi_{1/(2\ell)}$ can be computed in time $\poly(\Gamma, 1/\epsilon)$, where $\poly(\Gamma) \defeq \poly(\sum_{i=1}^n | \calA_i |, |\calB|, V)$; here, $V > 0$ denotes the maximum absolute value of a utility in the payoff tensor.

Let $L = \poly(\Gamma)$ be the Lipschitz parameter of $U(\vx, \vy)$ (\Cref{lemma:Lipcon}), and $\ell = \poly(\Gamma)$ be the smoothness parameter of $U(\vx, \vy)$ (analogously to \Cref{lemma:Lipcon}). The theorem below is the main result of this subsection. Our extendibility argument leverages strong duality, as in~\citep{VonStengel97:Team}, but, crucially, our argument only requires $\vx^* \in \calX$ to be an approximate stationary point, instead of a team-maxmin profile.


\begin{theorem}[Extending an Approximate Stationary Point]\label{thm:onegeneralize} Given an adversarial team game $\Gamma(n, U)$ we let $\vec{x}^* \in \calX$ be a sufficiently small $(\epsilon/\poly(\Gamma))$-near stationary point of the function $ \vx \mapsto \max_{\vec{y}\in \calY} U(\vec{x},\vec{y})$. Then, there exists a strategy for the adversary $\vec{y}^* \in \calY$ so that $(\vec{x}^*,\vec{y}^*)$ is an
\begin{equation}
    \label{eq:approxNE1}
    \epsilon-\textrm{approximate Nash equilibrium}.
\end{equation}
Furthermore, under \Cref{assumption:polexp}, such a strategy $\vy^* \in \calY$ can be computed in polynomial time by solving a suitable linear program.
\end{theorem}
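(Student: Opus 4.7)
The plan is to construct $\vec{y}^*$ as the solution of a polynomial-size linear program in $\vec{y}$ (parameterized by $\vec{x}^*$) whose optimal slack directly measures how far $(\vec{x}^*, \vec{y}^*)$ is from being an $\epsilon$-approximate Nash equilibrium, and to bound that slack via strong LP duality together with the approximate stationarity of $\vec{x}^*$.

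First I set up the LP. Multilinearity of $U$ in the team coordinates reduces each team player's unilateral best-response condition to $|\calA_i|$ linear inequalities in $\vec{y}$ (one per pure deviation $a_i' \in \calA_i$), and the adversary's best-response condition is a single linear inequality. Writing $V^* \defeq \max_{b \in \calB} U(\vec{x}^*, b)$, I consider
\begin{align*}
  z^\star = \min_{\vec{y} \in \Delta(\calB),\, z \ge 0} \;\; z \quad \text{s.t.} \quad & V^* - \sum_b y_b\, U(\vec{x}^*, b) \le z, \\
  & \sum_b y_b \bigl[U(\vec{x}^*, b) - U(a_i', \vec{x}^*_{-i}, b)\bigr] \le z \quad \forall i \in \calN,\, \forall a_i' \in \calA_i.
\end{align*}
Under \Cref{assumption:polexp} every coefficient is exactly computable in $\poly(\Gamma)$ time, so the LP has polynomial size and an optimum $\vec{y}^*$ can be found in polynomial time; by design, $z^\star \le \epsilon$ is exactly the statement that $(\vec{x}^*, \vec{y}^*)$ is an $\epsilon$-approximate Nash equilibrium.

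The technical heart is to prove $z^\star \le \epsilon$. I would dualize: the dual has a nonnegative multiplier $\lambda_0$ for the adversary constraint and nonnegative multipliers $\{\lambda_{i, a_i'}\}$ for the team constraints, normalized so that $\lambda_0 + \sum_{i, a_i'} \lambda_{i, a_i'} = 1$. A direct calculation, using multilinearity of $U$ and the identity $U(\vec{x}^*, \vec{y}) - U(a_i', \vec{x}^*_{-i}, \vec{y}) = -\nabla_{\vec{x}_i} U(\vec{x}^*, \vec{y}) \cdot (e_{a_i'} - \vec{x}_i^*)$, shows that after grouping terms the dual objective equals
\[ z^\star = -\vec{d}^\top \nabla_{\vec{x}} U(\vec{x}^*, \tilde{\vec{y}}) \]
for some $\tilde{\vec{y}} \in \Delta(\calB)$ and some tangent direction $\vec{d} = \bigl(w_i(\vec{x}_i' - \vec{x}_i^*)\bigr)_{i \in \calN} \in T_{\calX}(\vec{x}^*)$ of bounded norm, both read off from the dual multipliers; complementary slackness moreover forces $\tilde{\vec{y}}$ to be supported on the best-response polytope $P(\vec{x}^*) := \{\vec{y} \in \calY : U(\vec{x}^*, \vec{y}) = V^*\}$ whenever $\lambda_0 > 0$. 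By Danskin's theorem, $\nabla_{\vec{x}} U(\vec{x}^*, \tilde{\vec{y}})$ lies in the Clarke subdifferential of $\phi(\vec{x}) := \max_{\vec{y} \in \calY} U(\vec{x}, \vec{y})$ at $\vec{x}^*$, so $\vec{d}$ is a bona fide descent direction of $\phi$ in the tangent cone of $\calX$ at $\vec{x}^*$ with rate at least $z^\star$.

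The main obstacle is that the hypothesis of the theorem concerns stationarity of the Moreau envelope $\phi_{1/(2\ell)}$ rather than of $\phi$ itself, so the descent direction produced above does not immediately contradict the hypothesis. To bridge the gap I would invoke \Cref{lemma:close-prox}: the proximal point $\hat{\vec{x}}$ of $\vec{x}^*$ sits within distance $O(\epsilon/(\ell \cdot \poly(\Gamma)))$ of $\vec{x}^*$, and first-order optimality of the proximal subproblem at $\hat{\vec{x}}$ produces an honest subgradient of $\phi$ at $\hat{\vec{x}}$ lying within $O(\epsilon/\poly(\Gamma))$ of the normal cone of $\calX$ at $\hat{\vec{x}}$. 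Combining $\ell$-smoothness of $U(\cdot, \vec{y})$ uniformly in $\vec{y}$---which bounds how $\nabla_{\vec{x}} U(\vec{x}, \tilde{\vec{y}})$ moves between $\hat{\vec{x}}$ and $\vec{x}^*$---with the Lipschitz continuity and weak convexity of $\phi$ (\Cref{lemma:Lip,lemma:weaklyconvex}), this near-normality transfers from $\hat{\vec{x}}$ to $\vec{x}^*$ with only a $\poly(\Gamma)$ blow-up. Consequently, for \emph{any} tangent direction of bounded norm at $\vec{x}^*$ and any $\tilde{\vec{y}} \in P(\vec{x}^*)$, the downward directional derivative $-\vec{d}^\top \nabla_{\vec{x}} U(\vec{x}^*, \tilde{\vec{y}})$ is at most $O(\epsilon/\poly(\Gamma))$; choosing the hidden $\poly(\Gamma)$ factor in the hypothesis to swallow these Lipschitz and smoothness constants forces $z^\star \le \epsilon$, yielding the required $\vec{y}^*$.
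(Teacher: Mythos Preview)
Your LP formulation is essentially the dual of the paper's (the paper writes the primal in the team's unilateral deviations and recovers $\vec{y}$ from the dual, whereas you write the primal in $\vec{y}$ and recover deviation weights from the dual), so the overall strategy is the same. But the step where you close the bound has a genuine gap.

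You assert that near-stationarity of the Moreau envelope implies that ``for \emph{any} tangent direction of bounded norm at $\vec{x}^*$ and \emph{any} $\tilde{\vec{y}} \in P(\vec{x}^*)$, the downward directional derivative $-\vec{d}^\top \nabla_{\vec{x}} U(\vec{x}^*, \tilde{\vec{y}})$ is at most $O(\epsilon/\poly(\Gamma))$.'' This is false. Proximal optimality at $\hat{\vec{x}}$ produces \emph{one} subgradient of $\phi$ lying nearly in the normal cone; it does not say every element of $\partial\phi$ does. A one-player example already breaks your claim: with $U(x,b_1)=x$, $U(x,b_2)=1-x$ on $[0,1]$, the exact minimizer $x^*=\tfrac12$ is a stationary point of $\phi(x)=\max(x,1-x)$, yet for $\tilde{\vec{y}}=b_1\in P(x^*)$ and $\vec{d}=-1$ one gets $-\vec{d}^\top\nabla_x U(x^*,b_1)=1$. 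Your LP dual hands you a \emph{specific} pair $(\vec{d},\tilde{\vec{y}})$, and nothing ties that $\tilde{\vec{y}}$ to the good subgradient; nor does complementary slackness place $\tilde{\vec{y}}$ in $P(\vec{x}^*)$---it only yields $V^*-U(\vec{x}^*,\vec{y}^*)=z^\star$, which is circular.

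The fix is to exploit a degree of freedom you overlooked. After eliminating the free multiplier on the simplex constraint, your dual value reads $z^\star=\min_{\vec{y}\in\calY}\bigl[\lambda_0\bigl(V^*-U(\vec{x}^*,\vec{y})\bigr)-\vec{d}^\top\nabla_{\vec{x}}U(\vec{x}^*,\vec{y})\bigr]$, so you may substitute \emph{any} $\vec{y}$ on the right. Choosing $\vec{y}=\hat{\vec{y}}\in\arg\max_{\vec{y}}U(\hat{\vec{x}},\vec{y})$, the maximizer at the proximal point that realizes the good subgradient via Danskin, makes the first bracket $O(L\epsilon'/\ell)$ by Lipschitz continuity of $\phi$ and the second $O(\epsilon')$ by $\ell$-smoothness plus the normal-cone inclusion at $\hat{\vec{x}}$ (after writing $\vec{d}$ as a tangent direction at $\hat{\vec{x}}$ plus an $O(\epsilon'/\ell)$ perturbation). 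This is precisely the content the paper extracts, in its formulation, via the gradient-domination step \eqref{eq:two} together with \Cref{lem:moreaulinear}.
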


In what follows, we will use the notation $O(\cdot)$ to suppress $\poly(\Gamma)$ factors in order to simplify the exposition.

\begin{proof}[Proof of \Cref{thm:onegeneralize}] First, by linearity, we can write $U(\vx,\vy) := \sum_{b \in \calB} y(b) U_b(\vec{x})$, for some function $U_b : \calX \to \R$, for any action $b \in \calB$. We consider the following linear program with variable $u \in \R$:
\begin{equation}
    \label{eq:flp}
\begin{aligned}
    &\mathrlap{\min~ u} \\ 
    &\textrm{ s.t. } & u \geq U_b(\vx^*) & \textrm{ for all }b\in\calB,
\end{aligned}    
\end{equation}
where we recall that $\vx^*$ is an $\epsilon$-near stationary point of $\max_{\vec{y}\in\calY} U(\vec{x},\vec{y})$. Furthermore, we construct the following linear program with variables $(u,\vx) \in \R \times \calX$:
\begin{equation}
    \label{eq:primalLP}
\begin{aligned}
    &\mathrlap{\min ~ n u} \\
    &\textrm{s.t.} &nu \geq \sum_{i=1}^n U_b(\vx_i,{\vx}^*_{-i}) & \textrm{ for all }b\in\calB,\\
    & &\vec{x}_i \in \Delta(\calA_i) & \textrm{ for all }i \in \calN.
\end{aligned}
\end{equation}

Now let $u^* \in \R$ be the minimum of the first program~\eqref{eq:flp}; the existence of such $u^*$ is evident from the constraints of \eqref{eq:flp}. Further, let us assume that the second program attains a minimum at $(u^{\opt}, \vx^{\opt})$. We observe that the point $(u^*,\vx^*) \in \R \times \calX$ is feasible for the second LP since, by definition, $\vx^* \in \prod_{i=1}^n \Delta(\calA_i)$ belongs to the product of simplices, and $nu^* \geq \sum_{i=1}^n U_b(\vx^*) = n U_b(\vx^*)$, by feasibility of $u^*$ in~\eqref{eq:flp}. Therefore, we conclude that 
\begin{equation}
    \label{eq:star}
    u^* \geq u^{\opt}.
\end{equation}

Let $\hat{\vx} \in \calX$ be the proximal point of $\vx^*$ for the function $\max_{y\in \calY} U(\vx,\vy)$. By \Cref{lemma:close-prox}, it follows that $\norm{\hat{\vx}-\vx^*}_2 \leq \frac{\epsilon}{2\ell}$. Further, given that $u^*$ is the optimal solution to~\eqref{eq:flp}, it follows that $u^* = \max_{b \in \calB} U_b(\vx^*) = \max_{\vy \in \calY} U(\vx^*, \vy)$, in turn implying that
\begin{equation}\label{eq:cont}
nu^* = \max_{\vy \in \calY} \sum_{i=1}^n U(\vx^*,\vy) \leq \max_{\vy \in \calY} \sum_{i=1}^n U(\hat{\vx},\vy) + n L \frac{\epsilon}{2\ell},
\end{equation}
where we used the $L$-Lipschitz continuity of $\max_{\vy\in \calY} U(\vx,\vy)$ (\Cref{lemma:Lip}). 
\noindent \textcolor{black}{We also consider the function $w(\vec{x})\defeq \max_{\vy\in\calY} \sum_{i=1}^n U(\vec{x}_i,\hat{\vec{x}}_{-i},\vec{y}).$ We invoke \Cref{lem:moreaulinear} and get that the Moreau envelope of $w$ computed at $\hat{\vec{x}}$ gives} 
\begin{equation}\label{eq:smallw}
\norm{  \nabla w_{1/(2\ell)}(\hat{\vx}) }_2 \leq 4 \epsilon.
\end{equation}
\textcolor{black}{Applying~\Cref{lemma:Golowich}, to function $w$ at point $\hat{\vx}$ occurs}
\begin{equation}\label{eq:two}
\max_{\vy\in\calY} \sum_{i=1}^n U(\hat{\vx},\vy) - \min_{\vx\in\calX}\max_{\vy\in\calY}\sum_{i=1}^n U(\vx_i,\hat{\vx}_{-i},\vy) \leq \left(1+\frac{nL}{2\ell}\right)\norm{  \nabla w_{1/(2\ell)}(\hat{\vx}) }_2 \stackrel{(\ref{eq:smallw})}{=} O(\epsilon). 
\end{equation}
Further, we have
\begin{align}
    \min_{\vx\in\calX} \max_{\vy\in\calY} \sum_{i=1}^n U(\vx_i, \hat{\vx}_{-i},\vy) \leq  \max_{\vy\in\calY} \sum_{i=1}^n U(\vx^{\opt}_i,\hat{\vx}_{-i},\vy) &\leq \max_{\vy\in\calY} \sum_{i=1}^n U(\vx^{\opt}_i, \vx^*_{-i},\vy) + n L \frac{\epsilon}{2\ell} \label{align:Lip} \\ 
    &= nu^{\opt} + n L \frac{\epsilon}{2\ell},\label{eq:three}
\end{align}
where \eqref{align:Lip} follows by $(nL)$-Lipschitz continuity of $ \hat{\vx} \mapsto \max_{\vy\in \calY} \sum_{i=1}^n U(\vx_i^{\opt}, \hat{\vx}_{-i}, \vy)$, which in turn follows by $L$-Lipschitz continuity of $U(\vx, \vy)$ and \Cref{lemma:Lip}; and \eqref{eq:three} uses the fact that $n u^{\opt} =  \max_{b \in \calB} \sum_{i=1}^n U_b(\vx_i^{\opt}, \vx^*_{-i}) = \max_{\vy \in \calY} \sum_{i=1}^n U(\vx_i^{\opt}, \vx^*_{-i}, \vy)$, by optimality of $(u^{\opt}, \vx^{\opt})$. As a result, we conclude from \eqref{eq:star}, (\ref{eq:cont}), (\ref{eq:two}) and (\ref{eq:three}) that
\begin{equation}\label{eq:mainproof}
u^{\opt}\leq u^* \leq u^{\opt} + O(\epsilon). 
\end{equation}
Consider now the dual of the LP~\eqref{eq:primalLP} above, namely
\begin{equation}
    \label{eq:dual}
    \begin{aligned}
        & \mathrlap{\max ~\sum_{i=1}^n z_i} \\
        & \textrm{s.t.} & z_i \leq \sum_{b \in \calB} y(b) U_b(a, \vx^*_{-i}), & \textrm{ for all } i \in \calN, a \in \calA_i, \\
        & & \sum_{b\in\calB} y(b) =1, \\
        & & \vy\geq 0.  \\
    \end{aligned}
\end{equation}
Assume that $(\vz^{\opt},\vy^{\opt})$ is an optimal of the dual LP. Then, we observe that 
\begin{align}
    z^{\opt}_i = \sum_{a \in \calA_i} x^*_{i}(a) z^{\opt}_i &\leq \sum_{a \in \calA_i} \sum_{b\in\calB} x^*_{i}(a) y^{\opt}(b)  U_b(a,\vx^*_{-i}) \label{align-simfeas}
    \\ &=  \sum_{b\in\calB}  y^{\opt}(b)  \sum_{a \in \calA_i}  x^*_{i}(a)  U_b(a,\vx^*_{-i}) \nonumber 
    \\ &= \sum_{b\in\calB} y^{\opt}(b) U_b (\vx^*) \leq \sum_{b\in\calB} y^{\opt}(b) u^* = u^*, \label{align-last-feas}
\end{align}
where \eqref{align-simfeas} follows from the fact that $\vx_i^* \in \Delta(\calA_i)$ and by feasibility of $\vz^{\opt}$ in \eqref{eq:dual}; and \eqref{align-last-feas} follows since $u^* \geq U_b(\vx^*)$, by feasibility of $u^*$ in \eqref{eq:flp}, and the fact that $\vy^{\opt} \in \Delta(\calB)$, which in turn follows by feasibility of $\vy^{\opt}$ in the dual program \eqref{eq:dual}. Now, by strong duality, we have
\begin{equation}
    \label{eq:duality}
  \sum_{i \in \calN} z_i^{\opt} = n u^{\opt}.
\end{equation}
Combining with \eqref{eq:mainproof},
\begin{equation*}
    \sum_{i=1}^n z_i^{\opt} \geq n u^* - O(\epsilon),
\end{equation*}
in turn implying that
\begin{align}
    n U(\vx^*, \vy^{\opt}) = \sum_{i \in \calN} \sum_{b \in \calB} y^{\opt}(b) U_b(\vx^*) &= \sum_{i \in \calN} \sum_{a \in \calA_i} \sum_{b\in\calB} x^*_i(a) y^{\opt}(b) U_b(a, \vx_{-i}^*) \label{align:f-eq} \\
    &\geq \sum_{i \in \calN} \sum_{a \in \calA_i} x^*_i(a) z_i^{\opt} \geq \sum_{i \in \calN} z_i^{\opt} \geq n u^* - O(\epsilon), \label{align:f-sec}
\end{align}
where \eqref{align:f-eq} uses the linearity of $U(\vx, \vy)$ with respect to $\vy$ and multilinearity with respect to $\vx$; and \eqref{align:f-sec} follows from the fact that for any $i \in \calN$ and $a \in \calA_i$, it holds that $z^{\opt}_i \leq \sum_{b \in \calB} y^{\opt}(b) U_b(a, \vx^*_{-i})$, by feasibility of the pair $(\vz^{\opt}, \vy^{\opt})$ for the dual \eqref{eq:dual}. Given that $u^*$ is the utility the adversary obtains when best responding to $\vx^* \in \calX$, that is, $u^* = \max_{\vy \in \calY} U(\vx^*, \vy)$, \eqref{align:f-sec} establishes that $\vy^{\opt}$ is an $O(\epsilon)$-approximate best response to $\vx^*$.

Finally, let us bound the benefit of unilateral deviations from any team player. Strong duality~\eqref{eq:duality} along with \eqref{align-last-feas} yields that for any player $i \in \calN$,
\begin{equation*}
    z_i^{\opt} = n u^{\opt} - \sum_{i' \neq i} z_{i'}^{\opt} \geq n u^{\opt} - (n-1) u^* \geq u^* - O(\epsilon),
\end{equation*}
by \eqref{eq:mainproof}. But, by optimality of $(\vz^{\opt}, \vy^{\opt})$, it follows that 

$$z_i^{\opt} = \min_{a \in \calA_i} U(a, \vx^*_{-i}, \vy^{\opt}) = \min_{\vx_i \in \calX_i} U(\vx_i, \vx_{-i}^*, \vy^{\opt}).$$ 

We conclude that $(\vx^*,\vy^{\opt})$ is an $O(\epsilon)$-approximate Nash equilibrium. Rescaling $\epsilon$ with a sufficiently large $\poly(\Gamma)$ factor establishes~\eqref{eq:approxNE1}. Finally, the dual linear program~\eqref{eq:dual} has polynomially many constraints and variables, and its coefficients can be determined in polynomial time under the polynomial expectation property (\Cref{assumption:polexp}). This completes the proof.
\end{proof}

\subsection{The Potential Function and CLS Membership}
\label{sec:potential}

Leveraging \Cref{thm:onegeneralize}, here we establish the main result of this section: computing an $\epsilon$-approximate Nash equilibrium of an adversarial team game $\Gamma(n, U)$ belongs to the class $\CLS$, for any (polynomial) number of players $n \in \N$; $\CLS$-completeness is then guaranteed directly by the recent $\CLS$-hardness result of~\citet{babichenko2021settling} regarding identical-interest \emph{polytensor} (aka. hypergraphical) games, wherein the polynomial expectation property (\Cref{assumption:polexp}) clearly holds.

\begin{theorem}[\citealt{babichenko2021settling}]
    Computing an $\epsilon$-approximate Nash equilibrium in identical-interest games under \Cref{assumption:polexp} is \CLS-complete.
\end{theorem}

In this context, we first introduce $\gdmax$, described in detail below.

\paragraph{The algorithm.} The proposed algorithm, namely $\gdmax(\Gamma, \epsilon)$ (\Cref{alg:gdmax}), takes as input an adversarial team game $\Gamma$ and a precision parameter $\epsilon > 0$. The first step is to initialize all players' strategies at an arbitrary point $(\vx^{(0)}, \vy^{(0)}) \in \calX \times \calY$, and further set $T$, the number of iterations, to be sufficiently large $\poly(\Gamma)/\epsilon^4$. The algorithm then proceeds for $T$ iterations. In each step $t \in [T]$, we compute a best response for the adversary based on the current strategy for the team (\Cref{line:br}); under \Cref{assumption:polexp}, this step can be trivially performed in polynomial time by computing $\max_{b \in \calB} U(\vx^{(t-1)}, b)$. Next, based on the best response of the adversary, each player performs a projected gradient descent step (\Cref{line:gd}). More precisely, in \Cref{line:gd} we denote by $\Pi_{\calX_i}(\cdot)$ the Euclidean projection to the set $\calX_i$, which is well-defined since $\calX_i$ is nonempty, convex and compact. Also, since the strategy set $\calX_i$ of each player is a simplex, it is well-known that the projection can be computed exactly in nearly linear time in $|\calA_i|$. We further remark that the gradient for each player can be computed in polynomial time by virtue of \Cref{assumption:polexp}. Indeed, we observe that, by multilinearity, $\frac{\partial}{\partial x_i(a_i)} U(\vx, \vy) = \E_{\vec{a}_{-i} \sim \vec{x}_{-i}, b \sim \vec{y} } [U(a_i, \vec{a}_{-i}, b)]$. Now, based on the updated strategy of the team $\vx^{(t)}$, the response of the adversary is determined by solving the dual LP~\eqref{eq:dual} (\Cref{line:extend}) introduced earlier in the proof of \Cref{thm:onegeneralize}; more precisely, we replace $\vx^*$ in~\eqref{eq:dual} with the current strategy of the team $\vx^{(t)}$. In light of \Cref{thm:onegeneralize}, this is guaranteed to yield an $\epsilon$-approximate Nash equilibrium if $\vx^{(t)}$ is a sufficiently close approximate stationary point. This process is repeated until $(\vx^{(t)}, \vy^{(t)})$ is an $\epsilon$-approximate Nash equilibrium (\Cref{line:break}); in the sequel, we will show that $\poly(\Gamma)/\epsilon^4$ iterations suffice (\Cref{theorem:pot}) for the termination of the algorithm. Before we proceed, we summarize our problem below.

\begin{algorithm}[!ht]
  \caption{$\gdmax(\Gamma, \epsilon)$}
  \label{alg:gdmax}
  \begin{algorithmic}[1]
    \State \textbf{Input}: Adversarial team game $\Gamma$, precision parameter $\epsilon > 0$
    \State \textbf{Output}: An $\epsilon$-approximate Nash equilibrium of $\Gamma$
    \State Initialize $(\vec{x}^{(0)},\vec{y}^{(0)}) \in \calX \times \calY$ and $T = \poly(\Gamma)/\epsilon^4$ \label{line:init}
    \For{$t \gets 1,2,\dots, T$}
    \If{$(\vec{x}^{(t-1)},\vec{y}^{(t-1)})$ is an $\epsilon$-approximate NE} \label{line:break}
      \State \textbf{break}
      \EndIf
      \State {$\ay^{(t)}\gets  \arg \max_{\vec{y}\in \calY} U (\vec{x}^{(t-1)},\vec{y})$} \label{line:br}
      \State {$\vec{x}^{(t)}_{i} \gets
                            \Pi_{\calX_i}\left\{ 
                                \vec{x}_{i}^{(t-1)}-\eta  \nabla_{\vec{x}_i} U \big( {\vec{x}^{(t-1)},\ay^{(t)}}
                            \big)
                                \right\} \label{line:gd}
                        $} \Comment{for all players $i \in \calN$ }
      \State $\vec{y}^{(t)} \gets \texttt{ExtendNE}(\vec{x}^{(t)})$ \label{line:extend} \Comment{Solve dual LP$(\vx^{(t)})$~\eqref{eq:dual} }
      \EndFor\\
      \Return{$(\vec{x}^{(t)},\vec{y}^{(t)})$}
  \end{algorithmic}
\end{algorithm}

\begin{nproblem}[\textsc{TeamVsAdversary-Nash}]
  \textsc{Input:} A polynomial-time Turing machine (TM) $\calC_U$ for the succinct representation of the utility (see \Cref{assumption:polexp}) and an approximation parameter $\epsilon > 0$.
  \smallskip

  \noindent \textsc{Output:} A strategy profile $(\vec{x}^*,\vec{y}^*) \in \calX \times \calY$ which is an $\epsilon$-approximate NE for the game $\Gamma(n, U)$.
\end{nproblem}

\textcolor{black}{Above, $\log (1/\epsilon)$ is polynomial in the input, so that the approximation parameter $\epsilon > 0$ can be represented with a polynomial number of bits.}

For the sake of self-inclusiveness, we define the necessary complexity classes $\FNP$ and $\CLS$, as well as the notion of reductions that we use in this paper to show the membership or hardness of a problem for one of these complexity classes. To formally define a search problem, we use the standard notation $\{0, 1\}^*$ to denote the set of all finite length bit-strings.

\begin{definition}[Search Problems - ${\FNP}$] \label{def:FNP}
    A binary relation $\calR \subseteq \{0, 1\}^* \times \{0, 1\}^*$ is in
  the class $\FNP$ if (i) for every $\vec{p}, \vec{q} \in \{0, 1\}^*$ such that
  $(\vec{p}, \vec{q}) \in \calR$, it holds that 
  $\abs{\vec{q}} \le \poly(\abs{\vec{p}})$; and (ii) there exists an algorithm that
  verifies whether $(\vec{p}, \vec{q}) \in \calR$ in time 
  $\poly(\abs{\vec{p}}, \abs{\vec{q}})$. The \textit{search problem} associated with
  a binary relation $\calR$ takes some $\vec{p}$ as input and requests as output
  some $\vec{q}$ such that $(\vec{p}, \vec{q}) \in \calR$, or outputs $\bot$ if no
  such $\vec{q}$ exists. 
  The \textit{decision problem} associated with $\calR$
  takes some $\vec{p}$ as input and requests as output the bit $1$, if there
  exists some $\vec{q}$ such that $(\vec{p}, \vec{q}) \in \calR$, and the bit $0$
  otherwise. The class $\NP$ is defined as the set of decision problems
  associated with relations $\calR \in \FNP$.
\end{definition}

\begin{definition}[Polynomial-Time Reductions]
    A search problem $P_1$ is \textit{polynomial-time reducible} to a search
  problem $P_2$ if there exist polynomial-time computable functions 
  $f : \{0, 1\}^* \to \{0, 1\}^*$ and 
  $g : \{0, 1\}^* \times \{0, 1\}^* \times \{0, 1\}^* \to \{0, 1\}^*$
  with the following properties: (i) if $\vec{p}$ is an input to $P_1$, then
  $f(\vec{p})$ is an input to $P_2$; and (ii) if $\vec{q}$ is a solution to $P_2$ on
  input $f(\vec{p})$, then $g(\vec{p}, f(\vec{p}), \vec{q})$ is a solution to $P_1$ on
  input $\vec{p}$.
\end{definition}

Finally, leveraging the recent results of \citet{Fearnley21:The}, $\CLS$ can be simply defined as the set of all problems in $\FNP$ that belong to both $\PLS$ and $\PPAD$. Thus,  we start our $\CLS$ inclusion by recalling the seminal work of~\citet{papadimitriou1994complexity,Daskalakis06:The}, which established that the problem of computing an $\epsilon$-approximate Nash equilibrium of any $n$-player normal-form game belongs to $\PPAD$.

\begin{corollary}\label{cor:PPAD-inclusion}
\textsc{TeamVsAdversary-Nash} is in $\PPAD$.
\end{corollary}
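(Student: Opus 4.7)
The plan is to view \textsc{TeamVsAdversary-Nash} as a special instance of the generic problem of computing an approximate Nash equilibrium in a succinctly represented multi-player normal-form game, and then invoke the classical $\PPAD$-membership results of \citet{papadimitriou1994complexity} and \citet{Daskalakis09:The}. First I would observe that the adversarial team game $\Gamma(n, U)$ is, strictly speaking, an $(n+1)$-player normal-form game in which the minimizers $i \in \calN$ share utility $-U$ and the adversary has utility $+U$, with strategy spaces that are (products of) simplices. Existence of a Nash equilibrium, and hence totality of the search problem, is immediate from Nash's theorem.

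Next I would invoke the fact that computing an $\epsilon$-approximate Nash equilibrium of any $n$-player normal-form game is in $\PPAD$ \citep{Daskalakis09:The}. For games specified by an explicit utility tensor this is the original statement; for succinctly represented games, \citet{Papadimitriou08:Computing} showed that the same reduction to \textsc{End-of-Line} goes through provided the game satisfies the polynomial expectation property. Under \Cref{assumption:polexp} we can evaluate $U(\vx, \vy)$ in time $\poly(n, \sum_i |\calA_i|, |\calB|, |\vx|, |\vy|)$, and by multilinearity of $U$ the partial derivatives $\partial U / \partial x_i(a_i) = U(\vec{e}_{a_i}, \vx_{-i}, \vy)$ can be evaluated within the same time bound---these are precisely the quantities that the Brouwer-style construction underlying the $\PPAD$ reduction needs to access through the oracle $\calC_U$.

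The concrete step, therefore, is to compose the polynomial-time TM $\calC_U$ given in the input of \textsc{TeamVsAdversary-Nash} with the Nash-to-Brouwer reduction of \citet{Daskalakis09:The}, producing in polynomial time an instance of \textsc{End-of-Line} whose solutions pull back (via the polynomial-time solution map of that reduction) to $\epsilon$-approximate Nash equilibria of $\Gamma(n, U)$; since $\log(1/\epsilon)$ is a polynomial, the bit complexity of the discretization used in the reduction remains polynomial. I expect no genuine obstacle here: the adversarial team structure only constrains the \emph{form} of the utilities (shared within the team, zero-sum across teams), and the generic $\PPAD$-membership argument is utility-agnostic. The only bookkeeping to check is that the polynomial expectation property is preserved through the team-game format---which is immediate from the single-function representation $U$ in \Cref{sec:teams}---so that $\calC_U$ indeed supplies the oracle access required by the reduction.
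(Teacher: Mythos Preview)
Your proposal is correct and follows essentially the same approach as the paper: the paper does not give a standalone proof of this corollary but simply recalls, immediately before stating it, the results of \citet{papadimitriou1994complexity} and \citet{Daskalakis06:The} showing that computing an $\epsilon$-approximate Nash equilibrium of any $n$-player normal-form game lies in $\PPAD$. Your write-up is more detailed---spelling out the role of \Cref{assumption:polexp} and the succinct-game framework of \citet{Papadimitriou08:Computing}---but the underlying argument is identical.
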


Hence, our proof focuses on showing that \textsc{TeamVsAdversary-Nash} belongs to $\PLS$, or equivalently, that it is polynomial-time reducible to a modification of \textsc{General-Real-LocalOpt}~\citep{Fearnley21:The,Daskalakis11:Continuous}, introduced below using polynomial-time TMs.

\begin{nproblem}[\kappa-\textsc{General-Real-LocalOpt-TM}]
  \textsc{Input:} \begin{itemize}
  \item Precision/stopping parameter $\epsilon'>0.$
  \item A bounded and nonempty domain $D = [0, 1]^k$, for $k \in \N$.
 \item Two polynomial-time Turing machines (TMs) $\mathcal{C}_h$ and $\mathcal{C}_w$ which evaluate the functions $h : \mathbb{R}^k \to \mathbb{R}$ and $w : \mathbb{R}^k \to \mathbb{R}^k$, constrained to have running time bounded by $z^\kappa$, where $z$ is the input size of the TMs and $\kappa \in \N$.  
  \item It is also promised that the function $h$ is Lipschitz continuous.
  \end{itemize}
  \smallskip

  \noindent \textsc{Output:} Compute an approximate local optimum of $h$ with respect to $w$ on domain $D.$ Formally, find $\vec{x}\in D$ such that  $h\left(\textcolor{black}{\Pi_{D}}\left(w(\vec{x})\right)\right) \geq h(\vec{x}) - \epsilon'.$
\end{nproblem}

\begin{remark}
    \label{remark:TM}
Let us briefly point out a subtle technical issue in the definition above. To study the computational complexity of a problem that takes as input a polynomial-time Turning machine, one can syntactically enforce an upper bound on the computation; this is precisely the role of parameter $\kappa \in \N$ in the definition above. In particular, we will say that $\textsc{General-Real-LocalOpt-TM}$ belongs to a complexity class if for any fixed $\kappa \in \N$, $\kappa-\textsc{General-Real-LocalOpt-TM}$ belongs to that complexity class. This issue is also discussed by~\citet[Remark 2.6]{Daskalakis21:The}. 
\end{remark}

We now claim that this problem lies in the class $\PLS$:

\begin{lemma}\label{lem:probpls} $\textsc{General-Real-LocalOpt-TM}$ is in $\PLS$.
\end{lemma}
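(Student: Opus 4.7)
The plan is to realize $\textsc{General-Real-LocalOpt-TM}$ as a standard $\PLS$ instance by specifying (i) a finite feasible set, (ii) a polynomial-time computable, integer-valued cost, and (iii) a polynomial-time computable neighborhood oracle, and then to show that every (weak) local minimum of the cost over the neighborhood is a valid answer to the search problem.

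Concretely, I would fix a polynomial precision $p = p(|\textrm{input}|)$ large enough to accommodate one invocation of $\calC_w$, and let the feasible set $F$ consist of all bit-strings of length at most $p$ encoding rational vectors in $D = [0,1]^k$. The initial feasible point is the encoding of $\vzero$. For a candidate $s \in F$, the neighborhood is $N(s) \defeq \{s, \tilde{s}\}$, where $\tilde{s}$ is the length-$p$ truncation of $\Pi_D(w(s))$; here $\Pi_D$ is coordinate-wise clipping to $[0,1]$, which is trivial. The cost is $c(s) \defeq \lceil h(s)/\epsilon' \rceil$, an integer of polynomial bit-length since $\calC_h$ runs in polynomial time and $\log(1/\epsilon')$ is polynomial by assumption. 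The $\PLS$ oracle, given $s$, computes $\tilde{s}$ via a single call to $\calC_w$ (plus clipping and truncation), evaluates $c(s)$ and $c(\tilde{s})$ via two calls to $\calC_h$, and either returns $\tilde{s}$ as an improving neighbor (if $c(\tilde{s}) < c(s)$) or declares $s$ locally optimal. All these operations take polynomial time.

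For correctness, at a weak local minimum $s$ one has $\lceil h(s)/\epsilon' \rceil \leq \lceil h(\tilde{s})/\epsilon' \rceil$, which rearranges to $h(\tilde{s}) > h(s) - \epsilon'$; after a constant-factor rescaling of $\epsilon'$ that absorbs the Lipschitz-controlled perturbation induced by the length-$p$ truncation, this gives $h(\Pi_D(w(s))) \geq h(s) - \epsilon'$, as required. Existence of a local minimum is automatic: $F$ is finite and $c$ is integer-valued, so any maximal chain of strict improvements terminates in $F$.

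The main obstacle, and essentially the only subtlety, is precision matching: a single invocation of $\calC_w$ on a length-$p$ input may produce a bit-string of length $p^{\kappa}$, which does not itself lie in $F$. To keep neighbors inside the discretization one must round this output back to length $p$. Here the Lipschitz continuity of $h$ promised by the problem is essential, since it bounds the change in $h$ caused by rounding by a quantity decaying exponentially in $p$, which is absorbed into the $\epsilon'$-tolerance by choosing $p$ polynomially large in $\log(1/\epsilon')$ (and in a polynomially bounded Lipschitz constant, as is standard in this setting). This precision handling is routine in the continuous-local-optimum framework and is carried out analogously in the formulations of $\textsc{General-Real-LocalOpt}$ by \citet{Daskalakis11:Continuous} and \citet{Fearnley21:The}.
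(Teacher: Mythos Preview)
Your proposal is correct and follows essentially the same approach as the paper: the paper's proof simply cites the discretization argument of \citet[Theorem~2.1]{Daskalakis11:Continuous} (extended to $[0,1]^k$ by \citet[Proposition~D.2]{Fearnley21:The}), and what you have written is precisely that argument spelled out in detail, including the standard handling of the rounding error via the Lipschitz promise on $h$.
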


\begin{proof}
The argument proceeds by discretizing the domain as in the proof of~\citet[Theorem 2.1]{Daskalakis11:Continuous}, which directly applies for the domain $[0, 1]^k$ as noted by~\citet[Proposition D.2]{Fearnley21:The}.
\end{proof}

To proceed with our reduction, we first define (with a slight abuse of notation) the notion of approximate proximal point that will be used in the definition of the potential function.

\begin{definition}[Approximate Proximal Point]
    \label{def:app}
We denote by $\prox_{\phi_{1/(2\ell)}}(\vec{x}; \epsilon)$ any $\epsilon$-approximate solution in value to the program $$\min_{\vec{x}' \in \calX} \phi(\vec{x}') + \ell\norm{\vec{x}-\vec{x}'}_2^2,$$ where $\phi(\vec{x}):=\max_{\vec{y} \in \calY} U(\vec{x},\vec{y}).$
\end{definition}
\noindent Below, we prove that we can indeed compute an $\epsilon$-approximate solution in polynomial time.
\newcommand{\algo}{\mathcal{A}lg}
\begin{theorem}\label{th:PLS-part1}
Let function $V_{\vec{x}}:\calX\to\mathbb{R}$ be defined as $V_{\vec{x}}(\vec{x}')=\max_{\vec{y} \in \calY} U(\vec{x}',\vec{y}) + \ell\norm{\vec{x}-\vec{x}'}_2^2$. Then, there exists a $\poly(n, \sum_{i=1}^n |\calA_i|, |\calB|,\log(1/\epsilon))$-time algorithm $\algo: \calX\to\calX$ such that 
$$\min_{\vec{x}'\in\calX} V_{\vec{x}}(\vec{x}')\leq V_{\vec{x}}(\algo(\vec{x})) \leq \min_{\vec{x}'\in\calX} V_{\vec{x}}(\vec{x}')+\epsilon.$$ 
\end{theorem}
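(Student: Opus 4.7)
The plan is to cast the minimization of $V_{\vec{x}}(\cdot)$ over $\calX$ as a \emph{convex} optimization problem equipped with efficient first-order and separation oracles, and then invoke a cutting-plane (ellipsoid) method to obtain the logarithmic dependence on $1/\epsilon$ in the running time.

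\emph{Step 1: Strong convexity of $V_{\vec{x}}$.} By \Cref{lemma:weaklyconvex}, the function $\vec{x}' \mapsto \max_{\vec{y} \in \calY} U(\vec{x}', \vec{y})$ is $\ell$-weakly convex, i.e., its sum with $\tfrac{\ell}{2}\|\vec{x}'\|_2^2$ is convex. Expanding $\ell\|\vec{x}-\vec{x}'\|_2^2 = \ell\|\vec{x}\|_2^2 - 2\ell\langle \vec{x}, \vec{x}'\rangle + \ell\|\vec{x}'\|_2^2$ and regrouping, we can write
\[
V_{\vec{x}}(\vec{x}') \;=\; \Bigl(\max_{\vec{y} \in \calY} U(\vec{x}', \vec{y}) + \tfrac{\ell}{2}\|\vec{x}'\|_2^2\Bigr) \;+\; \tfrac{\ell}{2}\|\vec{x}'\|_2^2 \;-\; 2\ell\langle \vec{x}, \vec{x}'\rangle \;+\; \ell\|\vec{x}\|_2^2,
\]
which is the sum of a convex function, an $\ell$-strongly convex quadratic, a linear term, and a constant. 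Hence $V_{\vec{x}}$ is $\ell$-strongly convex on $\calX$, and in particular its minimizer over the convex compact set $\calX$ is unique.

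\emph{Step 2: Polynomial-time subgradient and separation oracles.} For any $\vec{x}' \in \calX$, a subgradient of $\vec{x}' \mapsto \max_{\vec{y}\in\calY}U(\vec{x}', \vec{y})$ is given by $\nabla_{\vec{x}'} U(\vec{x}', \vec{y}^\star(\vec{x}'))$ for any best response $\vec{y}^\star(\vec{x}') \in \arg\max_{\vec{y}\in\calY} U(\vec{x}', \vec{y})$. Under \Cref{assumption:polexp}, such a best response can be computed by iterating over the pure strategies $b \in \calB$ and selecting one maximizing $U(\vec{x}', b)$; the gradient is then obtained in $\poly(n, \sum_{i=1}^n|\calA_i|, |\calB|)$ time from the multilinearity of $U$. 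Adding the trivially computable gradient $-2\ell(\vec{x}-\vec{x}')$ of the proximal term yields a polynomial-time subgradient oracle for $V_{\vec{x}}$. Moreover, the feasible region $\calX = \prod_{i=1}^n \Delta(\calA_i)$ is a polytope of dimension $\sum_{i=1}^n (|\calA_i|-1)$ with an obvious polynomial-time separation oracle (check nonnegativity and the sum-to-one constraints for each player).

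\emph{Step 3: Ellipsoid / cutting-plane method.} With $V_{\vec{x}}$ convex, a polynomial-time subgradient oracle, a polynomial-time separation oracle for $\calX$, and an a priori $\poly(\Gamma)$ bound on $|V_{\vec{x}}|$ over $\calX$ (inherited from the bounded utilities and the bounded domain), the ellipsoid method (or any cutting-plane scheme) minimizes $V_{\vec{x}}$ over $\calX$ to additive accuracy $\epsilon$ in $\poly(\sum_{i=1}^n|\calA_i|, |\calB|, \log(1/\epsilon))$ time. This directly defines the algorithm $\algo$ with the claimed guarantee.

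\emph{Main obstacle.} The only subtle point is establishing convexity despite the nonconvexity of $\vec{x}' \mapsto \max_{\vec{y}}U(\vec{x}',\vec{y})$; this is precisely what the weak-convexity calculation of Step 1 resolves via the proximal regularizer $\ell\|\vec{x}-\vec{x}'\|_2^2$, whose curvature dominates the negative curvature of the max term. Once strong convexity is in hand, the remainder is the standard complexity analysis of first-order oracle-based convex optimization, where all numerical parameters (Lipschitz constant, smoothness, range of the objective) are $\poly(\Gamma)$ by the polynomial-bit encoding of the utilities.
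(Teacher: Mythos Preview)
Your proposal is correct and follows essentially the same approach as the paper: compute a subgradient of $V_{\vec{x}}$ via Danskin's theorem (best response of the adversary plus the gradient of the quadratic term), use the obvious separation oracle for the product of simplices $\calX$, and then apply the ellipsoid/cutting-plane method to get $\poly(\log(1/\epsilon))$ dependence. The only difference is that you make the (strong) convexity argument explicit in your Step~1, whereas the paper leaves this implicit when invoking the ellipsoid method; this is a welcome clarification rather than a different route.
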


\begin{proof}
First, we will show that a subgradient of $V_{\vec{x}}(\vec{x}')$ for any rational input $\vec{x}'$ is rational and can be computed both efficiently and accurately.
For the first summand, we recall Danskin's theorem to compute the gradient of $\max_{\vec{y} \in \calY} U(\vec{x}',\vec{y})$.

\begin{fact}[Generalized Danskin Theorem, \textcolor{black}{ \cite[Theorem 3]{Daskalakis20:Independent}}]
    \label{fact:Danskin}
\[\partial \phi(\vec{x}') = \partial \max_{\vec{y} \in \calY} U(\vec{x}',\vec{y})=\textcolor{black}{\textrm{conv}}\left\{ \nabla_{\vec{x}} U(\vec{x}',\vec{y}^\star(\vec{x}')) \Big|\ \  \vec{y}^\star \in \arg\max_{\vec{y} \in \calY} U(\vec{x}',\vec{y}) \right\}.\]
\end{fact}

For a fixed $\vec{x}'$, we can trivially compute a rational best response strategy $\vec{y}^{\star}(\vec{x}')\in \calY$ for the adversary (using \Cref{assumption:polexp}). Hence, using the oracle access to the utility function, we have access to a subgradient for $\phi(\vec{x}')$. For the second summand, we have that $\partial_{\vec{x}} \{\ell \cdot(\|\vec{x}-\vec{x}'\|^2_2)\}=2\ell(\vec{x}-\vec{x'})$. 
Therefore, the latter argument implies that for a given $\vec{x}'$, we can compute efficiently a subgradient of $V_{\vec{x}}(\cdot)$. 
Additionally, for the constraint set $\calX$ we can have easily a strong separation oracle as it is a product of simplices. Therefore, either via subgradient cuts of central-cut ellipsoid method~\citep[Chapter 2 \& 3]{grotschel2012geometric}, or more modern cutting plane methods~\citep{lee2015faster}, we can compute in polynomial time a point $\hat{\vec{x}}\in \calX$ with rational coordinates such that 
$\min_{\vec{x}'\in\calX} V_{\vec{x}}(\vec{x}')\leq V_{\vec{x}}(\hat{\vec{x}}) \leq \min_{\vec{x}'\in\calX} V_{\vec{x}}(\vec{x}')+\epsilon$.
\end{proof}

Equipped with the above theorem, the following proposition presents a polynomially computable potential function for the iterations of \Cref{alg:gdmax}:

\begin{theorem}
    \label{theorem:pot}
    Consider any precision $\epsilon > 0$, and let 
    $$g : \calX \ni \vec{x} \mapsto \max_{\vec{y}' \in \calY} U(\prox_{\phi_{1/(2\ell)}}(\vec{x}; O(\epsilon^4)), \vec{y}') + \ell\norm{\vec{x} - \prox_{\phi_{1/(2\ell)}}(\vec{x}; O(\epsilon^4))}_2^2.$$ For any adversarial team game $\Gamma(n, U)$, $\gdmax(\Gamma, \epsilon)$ with a sufficiently small learning rate $\eta = \Theta(\epsilon^2)$ satisfies
    $$g(\vec{x}^{(t)}) < g(\vec{x}^{(t-1)}) - \Omega(\epsilon^4),$$ 
    for any iteration $t$ until termination.
\end{theorem}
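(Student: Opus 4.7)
The overall approach is to apply the Davis--Drusvyatskiy descent lemma for projected subgradient descent on weakly convex functions, using the \emph{exact} Moreau envelope $\phi_{1/(2\ell)}$ as the Lyapunov function and then transferring the decrease to the approximate version $g$. By \Cref{lemma:weaklyconvex}, $\phi(\vx) \defeq \max_{\vy \in \calY} U(\vx, \vy)$ is $\ell$-weakly convex, so $\phi_{1/(2\ell)}$ is continuously differentiable with $\nabla \phi_{1/(2\ell)}(\vx) = 2\ell(\vx - \hat{\vx})$, where $\hat{\vx}$ is the exact proximal point. The key observation unlocking the argument is that Danskin's theorem makes each step of \Cref{alg:gdmax} an exact projected subgradient step on $\phi$: since $\ay^{(t)}$ is a best response at $\vx^{(t-1)}$ (\Cref{line:br}), the vector $\vec{s}^{(t)} \defeq \nabla_{\vx} U(\vx^{(t-1)}, \ay^{(t)})$ lies in $\partial \phi(\vx^{(t-1)})$, and by separability $\calX = \prod_i \calX_i$ the per-player projections in \Cref{line:gd} coincide with the global projection $\Pi_{\calX}(\vx^{(t-1)} - \eta \vec{s}^{(t)})$.

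The proof then proceeds in three steps. \textbf{(i)} Non-termination at iteration $t$ forces non-stationarity of $\vx^{(t-1)}$: since $\vy^{(t-1)} = \texttt{ExtendNE}(\vx^{(t-1)})$ is produced by the dual LP~\eqref{eq:dual} and $(\vx^{(t-1)},\vy^{(t-1)})$ is not an $\epsilon$-approximate NE, the contrapositive of \Cref{thm:onegeneralize} rules out $\vx^{(t-1)}$ being an $(\epsilon/\poly(\Gamma))$-approximate stationary point of $\phi$. Combined with the contrapositive of \Cref{lemma:close-prox}, this gives
\[
\|\vx^{(t-1)} - \hat{\vx}^{(t-1)}\|_2 \;=\; \Omega\bigl(\epsilon/\poly(\Gamma)\bigr) \qquad\Longleftrightarrow\qquad \bigl\|\nabla \phi_{1/(2\ell)}(\vx^{(t-1)})\bigr\|_2 \;=\; \Omega\bigl(\epsilon/\poly(\Gamma)\bigr).
\]
\textbf{(ii)} The classical descent inequality for projected subgradient descent on an $\ell$-weakly convex function (see~\citep{Davis18:Stochastic}) is obtained by comparing $\vx^{(t)}$ with $\hat{\vx}^{(t-1)}$ and using $\|\vec{s}^{(t)}\|_2 \le L$ from \Cref{lemma:Lip}; it yields
\[
\phi_{1/(2\ell)}(\vx^{(t)}) \;\le\; \phi_{1/(2\ell)}(\vx^{(t-1)}) \;-\; \tfrac{\eta}{2}\bigl\|\nabla \phi_{1/(2\ell)}(\vx^{(t-1)})\bigr\|_2^{2} \;+\; \eta^2 L^2.
\]
With $\eta = \Theta(\epsilon^2)$ calibrated against the constants of Step~(i) and against $L,\ell = \poly(\Gamma)$, the middle term dominates the $\eta^2 L^2$ noise and yields a net decrease of $\Omega(\epsilon^4)$. \textbf{(iii)} Because the proximal objective $\vx' \mapsto \phi(\vx') + \ell\|\vx-\vx'\|_2^2$ is $2\ell$-strongly convex, any $O(\epsilon^4)$-value-optimal proximal point satisfies $|g(\vx) - \phi_{1/(2\ell)}(\vx)| = O(\epsilon^4)$ uniformly over iterates; choosing the hidden constant in the $\prox_{\phi/(2\ell)}(\cdot;O(\epsilon^4))$ tolerance small enough transfers the $\Omega(\epsilon^4)$ drop on $\phi_{1/(2\ell)}$ to the same order of drop on $g$, proving the claim.

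The main obstacle is the careful interlocking of all $\poly(\Gamma)$ dependencies: the non-stationarity gap extracted from \Cref{thm:onegeneralize} depends polynomially on the game parameters, as do $L$, $\ell$, and the diameter of $\calX$, while the learning rate $\eta$, the $O(\epsilon^4)$ proximal tolerance, and the targeted $\Omega(\epsilon^4)$ per-step decrease must all be calibrated consistently so that every error term is dominated by the gradient term. A secondary care point is that the weak-convexity descent analysis must be executed in \emph{projected} form on the product of simplices (exploiting that projection distributes over the Cartesian product) and that the tangent-cone definition of approximate stationarity in \Cref{def:stat} is used consistently with both \Cref{thm:onegeneralize} and \Cref{lemma:close-prox}, rather than any interior-point gradient norm.
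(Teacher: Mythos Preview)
Your proposal is correct and follows essentially the same approach as the paper: both identify the update as a projected subgradient step on $\phi$ (via Danskin), derive the Moreau-envelope descent inequality using nonexpansiveness of the projection together with $\ell$-weak convexity/strong convexity of the proximal objective, and then invoke the contrapositive of \Cref{thm:onegeneralize} to turn non-termination into a lower bound $\|\vx^{(t)}-\tilde{\vx}^{(t)}\|_2=\Omega(\epsilon)$, after which $\eta=\Theta(\epsilon^2)$ yields the $\Omega(\epsilon^4)$ drop. The only cosmetic difference is that you invoke the Davis--Drusvyatskiy descent lemma for the \emph{exact} envelope $\phi_{1/(2\ell)}$ as a black box and transfer to $g$ at the end via $|g-\phi_{1/(2\ell)}|=O(\epsilon^4)$, whereas the paper reproves the same inequality inline directly in terms of $g$, absorbing the $O(\epsilon^4)$ proximal error along the way.
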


In the following argument, we utilize techniques by~\citet{lin2020gradient} in order to argue that the team players reach an approximate stationary point (in the sense of \Cref{def:stat}).

\begin{proof}[Proof of \Cref{theorem:pot}]
Let us fix a number of iterations $T \in \N$. By $\ell$-smoothness of $U(\vec{x},\vec{y})$, we have that for all $\vec{x} \in \calX$ and $t \in [T]$,
\begin{equation}\label{eq:smooth1}
    \max_{\vec{y}' \in \calY} U(\vec{x},\vec{y}') \geq U(\vec{x}, \ay^{(t)}) \geq U(\vec{x}^{(t)},\ay^{(t)}) +\langle\nabla_{\vx} U(\vec{x}^{(t)},\ay^{(t)}),\vec{x}-\vec{x}^{(t)}\rangle-\frac{\ell}{2}\norm{\vec{x}-\vec{x}^{(t)}}_2^2.
\end{equation}
We recall that $\phi : \vx \mapsto \max_{\vy \in \calY} U(\vx, \vy)$. For any iteration $t \in [T-1]$,
\begin{align}
        g(\vec{x}^{(t+1)}) &=\max_{\vec{y} \in \calY} U\left( \prox_{\phi_{1/(2\ell)}}(\vec{x}^{(t+1)}; O(\epsilon^4)), \vec{y}\right) + \ell \norm{\prox_{\phi_{1/(2\ell)}}(\vec{x}^{(t+1)}; O(\epsilon^4))-\vec{x}^{(t+1)}}_2^2 \label{align:pot-def} 
        \\ &\leq \max_{\vec{y} \in \calY}  U(\tilde{\vec{x}}^{(t)},\vec{y}) + \ell \norm{\tilde{\vec{x}}^{(t)}-\vec{x}^{(t+1)}}_2^2 + O(\epsilon^4) \label{align:approx-val}
        \\
            &\leq \max_{\vec{y} \in \calY} U(\tilde{\vec{x}}^{(t)},\vec{y}) + \ell \norm{ \tilde{\vec{x}}^{(t)} - \left( \vec{x}^{(t)}-\eta \nabla_{\vec{x}}U(\vec{x}^{(t)},\ay^{(t)})\right) }_2^2 + O(\epsilon^4) \label{align:nonexp}
        \\
        &\leq g (\vec{x}^{(t)}) + O(\epsilon^4) + \eta^2 \ell L^2\nonumber
        \\&+ 2\eta \ell \left(\max_{\vec{y}\in\calY} U(\tilde{\vec{x}}^{(t)},\vec{y})-\max_{\vec{y} \in \calY} U(\vec{x}^{(t)},\vec{y})+\frac{\ell}{2}\norm{\vec{x}^{(t)}-\tilde{\vec{x}}^{(t)}}_2^2\right), \label{align:last}
\end{align}
where \eqref{align:pot-def} follows from the definition of $g$; \eqref{align:approx-val} uses the definition of $\prox_{\phi_{1/(2\ell)}}(\vec{x}^{(t)}; O(\epsilon^4))$ (\Cref{def:app}), with the convention that $\tilde{\vx}^{(t)} \defeq \prox_{\phi_{1/(2\ell)}}(\vx^{(t)})$; \eqref{align:nonexp} uses the definition of $\vx^{(t+1)}$ (\Cref{line:gd} in \Cref{alg:gdmax}), along with the fact that the (Euclidean) projection operator $\Pi(\cdot)$ is nonexpansive with respect to $\|\cdot\|_2$; and \eqref{align:last} uses \eqref{eq:smooth1} for $\vx \defeq \tilde{\vx}^{(t)}$, the property that $\| \nabla_{\vx} U(\vx^{(t)}, \ay^{(t)})\|_2 \leq L$ (by $L$-Lipschitz continuity of $U$), and the fact that $\max_{\vy' \in \calY} U(\tilde{\vx}^{(t)}, \vy') + \ell \| \tilde{\vx}^{(t)}-  \vx^{(t)}\|_2^2 \leq g(\vx^{(t)})$ (by definition of $\tilde{\vx}^{(t)}$). Further, since $\max_{\vec{y}' \in \calY} U(\vec{x},\vec{y}') + \ell \norm{\vec{x} - \vec{x}^{(t)}}_2^2$ is $\ell$-strongly convex (by \Cref{lemma:weaklyconvex}), we have
\begin{gather}
\max_{\vec{y} \in \calY} U(\vec{x}^{(t)},\vec{y}) - \max_{\vec{y} \in \calY} U(\tilde{\vec{x}}^{(t)},\vec{y}) - \frac{\ell}{2} \norm{\vec{x}^{(t)}-\tilde{\vec{x}}^{(t)}}_2^2 \geq \ell \norm{\vec{x}^{(t)}-\tilde{\vec{x}}^{(t)}}_2^2,
\end{gather} 
since $\tilde{\vx}^{(t)} = \arg \min_{\vx \in \calX} \max_{\vy \in \calY} U(\vx, \vy) + \ell \norm{\vx - \vx^{(t)}}_2^2$. Thus, combining with \eqref{align:last}, we conclude that
\begin{equation}
    \label{eq:pot}
g(\vec{x}^{(t+1)}) \leq g(\vec{x}^{(t)}) - 2\eta \ell^2 \norm{\vec{x}^{(t)}-\tilde{\vec{x}}^{(t)}}_2^2+ \eta^2\ell L + O(\epsilon^4).
\end{equation}

Now for any iteration $t+1$ for which the algorithm does not terminate, we know from \Cref{thm:onegeneralize} that $\norm{\vec{x}^{(t)}-\tilde{\vec{x}}^{(t)}}_2 = \Omega(\epsilon)$. Indeed, if $\norm{\vec{x}^{(t)}-\tilde{\vec{x}}^{(t)}}_2 \leq \epsilon/\poly(\Gamma)$, for a sufficiently large $\poly(\Gamma)$, \Cref{thm:onegeneralize} along with \Cref{lemma:close-prox} would imply that $(\vx^{(t)}, \vy^{(t)})$ is an $\epsilon$-approximate Nash equilibrium, which would in turn terminate the algorithm. As a result, for a sufficiently small learning rate $\eta = O(\epsilon^2)$, it follows from \eqref{eq:pot} that
\begin{equation}
    \label{eq:ratepot}
    g(\vx^{(t+1)}) \leq g(\vx^{(t)}) - \Omega(\epsilon^4) < g(\vx^{(t)}).
\end{equation}

This shows that $g$ is a proper potential function. Further, given that $g$ is bounded by $\poly(\Gamma)$, we conclude that after a sufficiently large number of iterations $T = \poly(\Gamma, 1/\epsilon)$ the algorithm will terminate. This concludes the proof.
\end{proof}

Given that the potential function $g$ is bounded by $\poly(\Gamma)$ and the decrease per iteration is $\Omega(\epsilon^4)$~\eqref{eq:ratepot}, we also obtain as a byproduct a (pseudo) fully polynomial-time approximation scheme (\FPTAS) for computing an approximate Nash equilibrium in adversarial team games:

\begin{corollary}[Pseudo-$\FPTAS$ for Approximate Nash]
    \label{cor:fptas}
    Consider any precision $\epsilon > 0$. For any adversarial team game $\Gamma(n, U)$, $\gdmax(\Gamma, \epsilon)$ with a sufficiently small learning rate $\eta = O(\epsilon^2)$ yields an $\epsilon$-approximate Nash equilibrium after a sufficiently large $T = \poly(\Gamma)/\epsilon^4$. Further, under \Cref{assumption:polexp}, every iteration of $\gdmax$ can be implemented in polynomial time. 
\end{corollary}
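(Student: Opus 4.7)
The plan is to obtain both assertions as nearly immediate consequences of \Cref{theorem:pot} together with the per-iteration considerations already spelled out in the description of $\gdmax$ (\Cref{alg:gdmax}).

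For the iteration bound, I would first observe that the potential $g$ is uniformly bounded in absolute value by $\poly(\Gamma)$: the term $\max_{\vy' \in \calY} U(\tilde{\vx}, \vy')$ is bounded by the maximum payoff magnitude $V$, and the quadratic term $\ell \|\vx - \tilde{\vx}\|_2^2$ is bounded by $\ell$ times the squared diameter of the product of simplices $\calX$, which is $\poly(\Gamma)$ since $\ell = \poly(\Gamma)$. Consequently, the total variation $g(\vx^{(0)}) - \inf_{\vx \in \calX} g(\vx) \leq \poly(\Gamma)$. By \Cref{theorem:pot}, so long as the algorithm has not yet hit the break condition on \Cref{line:break}, each iteration strictly decreases $g$ by at least $\Omega(\epsilon^4)$; therefore the algorithm must reach the break condition within $T = \poly(\Gamma)/\epsilon^4$ iterations. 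When the break triggers, the outputted $(\vx^{(t-1)},\vy^{(t-1)})$ is an $\epsilon$-approximate Nash equilibrium by the very condition being tested.

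For the polynomial-time implementability of each iteration, I would walk through the four non-trivial operations of one loop of \Cref{alg:gdmax}. (i) Verifying whether $(\vx^{(t-1)},\vy^{(t-1)})$ is an $\epsilon$-approximate NE reduces to computing the best responses $\max_{a \in \calA_i} U(a, \vx^{(t-1)}_{-i}, \vy^{(t-1)})$ for each team player and $\max_{b \in \calB} U(\vx^{(t-1)}, b)$ for the adversary, each of which is polynomial-time by \Cref{assumption:polexp}. (ii) Computing the adversary's best response $\ay^{(t)}$ on \Cref{line:br} is identical in cost. (iii) The projected gradient step on \Cref{line:gd} requires evaluating $\nabla_{\vx_i} U(\vx^{(t-1)}, \ay^{(t)})$, which by multilinearity equals $\E_{\vec{a}_{-i} \sim \vec{x}^{(t-1)}_{-i}, b \sim \ay^{(t)}} [U(a_i, \vec{a}_{-i}, b)]$ for each coordinate $a_i$, and is therefore obtainable in polynomial time under \Cref{assumption:polexp}; the Euclidean projection $\Pi_{\calX_i}(\cdot)$ onto a simplex is computable in nearly linear time in $|\calA_i|$. (iv) Finally, the extension step on \Cref{line:extend} solves the dual linear program~\eqref{eq:dual}, whose coefficients $U_b(a, \vx^{(t)}_{-i})$ are again computable in polynomial time by \Cref{assumption:polexp}, and which has polynomially many variables and constraints, hence is solvable in polynomial time by any standard LP algorithm.

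There is essentially no obstacle here beyond bookkeeping: \Cref{theorem:pot} already does the heavy lifting of establishing monotone progress, and \Cref{thm:onegeneralize} already guarantees the correctness of the extension on \Cref{line:extend}. The only subtlety worth flagging is that the learning rate $\eta = O(\epsilon^2)$ and the approximation tolerance $O(\epsilon^4)$ used inside $\prox_{\phi/(2\ell)}(\cdot; O(\epsilon^4))$ (invoked implicitly through $g$) are polynomially representable, so all numerical quantities produced throughout the run remain of polynomial bit-length, and the overall running time is $\poly(\Gamma)/\epsilon^4$, as claimed.
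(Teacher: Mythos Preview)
Your proposal is correct and follows essentially the same approach as the paper: the iteration bound is derived directly from \Cref{theorem:pot} together with the observation that $g$ is bounded by $\poly(\Gamma)$, and the per-iteration cost is handled by walking through \Cref{line:break,line:br,line:gd,line:extend} exactly as the paper does in its discussion of \Cref{alg:gdmax} and in the last paragraph of the proof of \Cref{thm:onegeneralize}. Your write-up is in fact more explicit than the paper's, which treats the corollary as an immediate byproduct of \eqref{eq:ratepot} and the boundedness of $g$.
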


In particular, we note that under the assumption that the maximum absolute value of a utility in the payoff tensor is bounded by $\poly(\sum_{i=1}^n |\calA_i|, |\calB|)$, the corollary above yields an $\FPTAS$. We next proceed with the proof of our main result.

\begin{theorem} 
    \label{theorem:cls}
The problem of computing an $\epsilon$-approximate Nash equilibrium in adversarial team games under \Cref{assumption:polexp} is $\CLS$-complete.
\end{theorem}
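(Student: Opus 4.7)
The plan is to combine three ingredients: (i) the collapse $\CLS = \PPAD \cap \PLS$ of~\citet{Fearnley21:The}; (ii) $\PPAD$-membership of \textsc{TeamVsAdversary-Nash}, which is already noted in \Cref{cor:PPAD-inclusion}; and (iii) the $\CLS$-hardness of~\citet{babichenko2021settling} for identical-interest polytensor games, which is inherited by adversarial team games since those games are captured by the team-versus-dummy-adversary reduction discussed in \Cref{sec:advcong}. Therefore, the remaining work---and the technical content of this proof---is to show \emph{$\PLS$-membership}, which I would establish by a polynomial-time reduction to $\kappa$-$\textsc{General-Real-LocalOpt-TM}$, which lies in $\PLS$ by \Cref{lem:probpls}.

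For the reduction, given an input $(\calC_U, \epsilon)$ to \textsc{TeamVsAdversary-Nash}, I would output the $\textsc{General-Real-LocalOpt-TM}$ instance with domain $D = [0, 1]^k$, where $k = \sum_{i=1}^n |\calA_i|$, together with the natural Lipschitz retraction onto the product of simplices $\calX \subset D$; objective $h \defeq M - g$, where $g$ is the potential of \Cref{theorem:pot} and $M = \poly(\Gamma)$ is a trivial upper bound on $g$ ensuring $h \geq 0$; and improvement map $w(\vx) \defeq \Pi_{\calX}\bigl(\vx - \eta \nabla_{\vx} U(\vx, \vy^\star(\vx))\bigr)$, where $\vy^\star(\vx) \in \arg\max_{\vy \in \calY} U(\vx, \vy)$ and $\eta = \Theta(\epsilon^2)$ is the learning rate from \Cref{alg:gdmax}. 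The stopping parameter is $\epsilon' \defeq c \cdot \epsilon^4$ for a sufficiently small constant $c > 0$. Both $h$ and $w$ are computable by polynomial-time Turing machines under \Cref{assumption:polexp}: the gradient and adversary's best response are polynomial-time by the polynomial expectation property, the projection onto a product of simplices runs in nearly linear time, and $g$ is evaluated by invoking the approximate proximal oracle of \Cref{th:PLS-part1}. Lipschitz continuity of $h$---required by the problem's promise---follows from \Cref{lemma:Lip} together with the standard fact that the Moreau envelope inherits a Lipschitz constant of the underlying function on a bounded domain.

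The correctness of the reduction is the contrapositive of \Cref{theorem:pot}. Given any solution $\hat{\vx} \in D$ satisfying $h(\Pi_D(w(\hat{\vx}))) \geq h(\hat{\vx}) - \epsilon'$, I claim that $(\hat{\vx}, \texttt{ExtendNE}(\hat{\vx}))$, computed via the dual LP~\eqref{eq:dual} as in \Cref{line:extend} of \Cref{alg:gdmax}, must be an $\epsilon$-approximate Nash equilibrium of $\Gamma$. Indeed, if this were not the case, \Cref{theorem:pot} would force $g(w(\hat{\vx})) < g(\hat{\vx}) - \Omega(\epsilon^4)$, which for a sufficiently small $c$ contradicts the local-optimality condition $h(w(\hat{\vx})) \geq h(\hat{\vx}) - \epsilon'$. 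Thus, the post-processing function $g(\cdot, \cdot, \cdot)$ in the definition of polynomial-time reductions is simply the invocation of $\texttt{ExtendNE}$, whose polynomial-time correctness is guaranteed by \Cref{thm:onegeneralize} together with \Cref{assumption:polexp}.

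The main obstacle lies in the syntactic bookkeeping around the Turing-machine-based definition of $\textsc{General-Real-LocalOpt-TM}$ highlighted in \Cref{remark:TM}: one must verify that the $\poly(\Gamma, \log(1/\epsilon))$ running times of the TMs that evaluate $h$ and $w$ fit within $z^\kappa$ for some fixed $\kappa \in \N$ independent of the instance, and that the Lipschitz promise is respected by our construction---in particular, that the retraction $[0,1]^k \to \calX$ does not amplify Lipschitz constants beyond $\poly(\Gamma)$. The discretization parameter inherited from the $\PLS$-membership argument of \Cref{lem:probpls} must also be small enough to preserve the $\Omega(\epsilon^4)$ descent guarantee of \Cref{theorem:pot}, which is why we choose $\epsilon' = \Theta(\epsilon^4)$; these are standard though delicate calibrations analogous to those appearing in prior $\CLS$-membership proofs for min-max problems.
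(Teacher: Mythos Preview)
Your overall strategy---reducing \textsc{TeamVsAdversary-Nash} to $\kappa$-\textsc{General-Real-LocalOpt-TM} via the potential $g$ of \Cref{theorem:pot}, then invoking \Cref{lem:probpls}, \Cref{cor:PPAD-inclusion}, and the collapse $\CLS = \PPAD \cap \PLS$---matches the paper's proof. However, your choice $h \defeq M - g$ has the wrong sign and breaks the reduction. The output condition of \textsc{General-Real-LocalOpt-TM} asks for a point with $h(\Pi_D(w(\vx))) \geq h(\vx) - \epsilon'$, i.e., a point at which the successor $w$ fails to \emph{decrease} $h$ by more than $\epsilon'$. With $h = M - g$ this unpacks to $g(w(\vx)) \leq g(\vx) + \epsilon'$, which by \Cref{theorem:pot} holds at \emph{every} non-terminal iterate (indeed $g(w(\vx)) < g(\vx) - \Omega(\epsilon^4)$ there), so an arbitrary point of $D$ is a valid solution and nothing forces it to correspond to a Nash equilibrium. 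The contradiction you assert in the correctness paragraph is therefore false: $g(w(\hat{\vx})) < g(\hat{\vx}) - \Omega(\epsilon^4)$ is perfectly consistent with $h(w(\hat{\vx})) \geq h(\hat{\vx}) - \epsilon'$ when $h = M - g$.

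The fix is simply to take $h \defeq g$ (the problem statement does not require $h \geq 0$); then at a non-terminal point \Cref{theorem:pot} gives $h(w(\vx)) < h(\vx) - \Omega(\epsilon^4) < h(\vx) - \epsilon'$ for $\epsilon' = c\epsilon^4$ small enough, so any \textsc{LocalOpt} solution must be terminal, i.e., $(\hat{\vx}, \texttt{ExtendNE}(\hat{\vx}))$ is an $\epsilon$-approximate Nash equilibrium---exactly the paper's argument. Aside from this sign slip, your construction is essentially the paper's, modulo the cosmetic choice of excluding the adversary's coordinates from $D$ and recovering $\vy$ only in post-processing via $\texttt{ExtendNE}$.
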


\begin{proof}
We want to show that, for some constant $\kappa \in \N$,
\[
\textsc{TeamVsAdversary-Nash} \leq_{P} \kappa-\textsc{General-Real-LocalOpt-TM}.
\] 
Given an instance of an adversarial team game $\Gamma(n, U)$ with precision parameter $\epsilon > 0$, we construct an instance of a $\textsc{General-Real-LocalOpt-TM}$ as follows. We set $D \defeq [0, 1]^k$, where $k \defeq \sum_{i=1}^n |\calA_i| + |\calB|$. Further, we let $h(\vx, \vy) \defeq h(\vx) \defeq \min_{\vx' \in \calX} \left\{ \phi(\vx') + \ell \|\vx - \vx'\|_2^2 \right\}$ be the potential function, which is Lipschitz continuous and can be evaluated with at most $\epsilon$-error in time $\poly(n, \sum_{i=1}^n |\calA_i|, |\calB|, \log(1/\epsilon))$ (\Cref{th:PLS-part1}). Finally, the successor $w(\vec{x},\vec{y})$ is defined in \Cref{line:gd,line:extend} of $\gdmax(\Gamma, \epsilon)$, which we have shown can be computed in time $\textrm{poly}(n, \sum_{i=1}^n |\mathcal{A}_i|, |\calB|)$ under \Cref{assumption:polexp}. Namely, if $\vy' \defeq \arg\max_{\vy \in \calY} U(\vx, \vy)$ \textcolor{black}{and $\vx' \defeq \Pi_{\calX} \left\{ \vx - \eta \nabla_{\vx} U(\vx, \vy') \right\}$,}
\begin{equation*}
    w(\vx, \vy) \defeq \left( \Pi_{\calX} \left\{ \vx - \eta \nabla_{\vx} U(\vx, \vy') \right\}, \texttt{ExtendNE}(\textcolor{black}{\vec{x}'}) \right) \in [0, 1]^k,
\end{equation*}
for a sufficiently small $\eta = O(\epsilon^2)$, where we recall that $\texttt{ExtendNE}(\vec{x})$ gives the solution of the polynomial LP~\eqref{eq:dual}. We note that both $h$ and $w$ can be implemented via polynomial-time TMs $\mathcal{C}_h$ and $\mathcal{C}_w$ (polynomial in the input size and $\log (1/\epsilon)$). In particular, for a suitable constant value $\kappa \in \N$, their running time will be bounded by $z^{\kappa}$, where $z$ is their input size. 

Now if $h(w(\vx, \vy)) \geq h(\vx, \vy) - \epsilon'$, for a sufficiently small $\epsilon' = \epsilon^4 \times \poly(\Gamma)$, \Cref{theorem:pot} implies that $(\vx, \vy)$ is an $\epsilon$-approximate Nash equilibrium. Thus, by \Cref{lem:probpls}, it follows that \textsc{TeamVsAdversary-Nash} is in $\PLS$.




Combining this with Corollary~\ref{cor:PPAD-inclusion}, we conclude that 
\textsc{TeamVsAdversary-Nash} belongs to $\PPAD \cap \PLS$. Thus, leveraging the recent result of~\citet{Fearnley21:The}, it holds that 
\textsc{TeamVsAdversary-Nash} belong to $\CLS$. Finally, $\CLS$-hardness follows directly by the hardness result of~\citet{babichenko2021settling}.
\end{proof}

\begin{remark}[Correlated Adversaries]
Another notable application of having a single adversary is the case where the adversary team has multiple players, but with the twist that the adversaries are allowed to correlate their strategies---\emph{i.e.}, the team is facing a ``virtual'' player. However, in that case the action space of that virtual player grows exponentially with the number of adversaries $m$, and so establishing polynomial-time algorithms when $m \gg 1$ requires further work.
\end{remark}

\begin{remark}[\textcolor{black}{On the Algorithm and the potential}] \textcolor{black}{In this section we showed that if the team players perform a step of gradient descent on the best response of the adversary, the Moreau envelop $\phi(\vx)$ of $\max_{\vy\in\calY} U(\vx,\vy)$ is strictly decreasing (Theorem \ref{theorem:pot}), for appropriate step-sizes. We note that if one instead performs the simple proximal point method, i.e.,}
$$\textcolor{black}{\vec{x}^{(t+1)} = \arg \min_{\vx\in \calX}\left\{\max_{\vy\in\calY}U(\vx,\vy) +\frac{1}{2\lambda}\norm{\vx-\vx^{(t)}}\right\}},$$
\textcolor{black}{then the Moreau envelop remains a potential function. It is noteworthy that the proximal point method can be interpreted as gradient descent applied to the Moreau envelope.}
\end{remark}

\section{Additional Applications of Two-Team Zero-Sum Games}
\label{sec:advcong}

In this section, we discuss about some additional interesting settings that can be cast as two-team zero-sum games. For the sake of generality, here we allow the adversary team to have more than a single player.

\paragraph{Adversarial potential games.} Consider a normal form game with two sets of players $\calN$ and $\calM$. Following the conventions of \Cref{sec:teams}, the minimizers $\calN$ will be associated with cost functions, while the maximizers $\calM$ will be associated with utilities. Further, there exists an underlying \emph{potential function} $\Phi : \calX \times \calY \to \R$, so that 
\begin{itemize}[noitemsep]
    \item[(i)] for each minimizer $i \in \calN$ and strategies $\vec{x}_i,\vec{x}_i' \in \calX_i$, 
    \begin{equation}
        \label{eq:cost-pot}
    c_i(\vec{x}_{i}',\vec{x}_{-i},\vec{y}) - c_i(\vec{x}_{i},\vec{x}_{-i},\vec{y}) = \Phi(\vec{x}_{i}',\vec{x}_{-i},\vec{y}) - \Phi(\vec{x}_{i},\vec{x}_{-i},\vec{y});    
    \end{equation}
    \item[(ii)] for each maximizer $j \in \calM$ and strategies $\vec{y}_j,\vec{y}_j' \in \calY_j$,
    \begin{equation}
        \label{eq:util-pot}
        u_j(\vec{x},\vec{y}_{j}',\vec{y}_{-j}) - u_j(\vec{x},\vec{y}_{j},\vec{y}_{-j}) = \Phi(\vec{x},\vec{y}_{j}',\vec{y}_{-j}) - \Phi(\vec{x},\vec{y}_{j},\vec{y}_{-j}).
    \end{equation}
\end{itemize}
While this setting is not a two-team zero-sum game, it can still be captured by one, namely $\Gamma(n,m,\Phi)$. More precisely, \eqref{eq:cost-pot} implies that $c_i (\vec{x},\vec{y}) = \Phi(\vec{x},\vec{y}) + \Psi_i(\vec{x}_{-i},\vec{y})$, for any $i \in \calN$ and $(\vx, \vy) \in \calX \times \calY$, where $\Psi_i$ does not depend on $\vec{x}_i$, and \eqref{eq:util-pot} implies that $u_j (\vec{x},\vec{y}) = \Phi(\vec{x},\vec{y}) + \Psi'_j(\vec{x},\vec{y}_{-j})$, where $\Psi_j$ does not depend on $\vec{y}_j$. As a result, we arrive at the following conclusion.

\begin{proposition}
    A strategy profile satisfying \eqref{def:vi} with respect to the two-team zero-sum game $\Gamma(n, m, \Phi)$ will be an $\epsilon$-approximate Nash equilibrium of the original adversarial potential game.
\end{proposition}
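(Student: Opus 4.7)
The plan is to leverage the potential structure together with the multilinearity of the mixed extension of $\Phi$ to translate the variational inequality condition for $\Gamma(n,m,\Phi)$ into the usual per-player deviation guarantees for the original game. The key observation is already laid out in the paragraph preceding the statement: for every minimizer $i\in\calN$ we have $c_i(\vx,\vy)=\Phi(\vx,\vy)+\Psi_i(\vx_{-i},\vy)$ with $\Psi_i$ independent of $\vx_i$, and for every maximizer $j\in\calM$ we have $u_j(\vx,\vy)=\Phi(\vx,\vy)+\Psi'_j(\vx,\vy_{-j})$ with $\Psi'_j$ independent of $\vy_j$. Consequently $\nabla_{\vx_i}c_i=\nabla_{\vx_i}\Phi$ and $\nabla_{\vy_j}u_j=\nabla_{\vy_j}\Phi$ at every point.

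First, I would unpack the VI condition~\eqref{def:vi} for the zero-sum surrogate $\Gamma(n,m,\Phi)$. By choosing a test vector that differs from $(\vx^*,\vy^*)$ only in the $i$-th block (respectively $j$-th block), the aggregate inequality specializes to
\begin{equation*}
\langle \nabla_{\vx_i}\Phi(\vx^*,\vy^*),\vx_i^*-\vx_i\rangle\le\epsilon \quad\forall\,\vx_i\in\calX_i,\qquad \langle \nabla_{\vy_j}\Phi(\vx^*,\vy^*),\vy_j^*-\vy_j\rangle\ge -\epsilon \quad\forall\,\vy_j\in\calY_j.
\end{equation*}
Second, I would invoke multilinearity: since $\Phi$ is multilinear in the product of simplices, for fixed $\vx_{-i}^*$ and $\vy^*$ the map $\vx_i\mapsto \Phi(\vx_i,\vx_{-i}^*,\vy^*)$ is affine, so
\begin{equation*}
\Phi(\vx_i,\vx_{-i}^*,\vy^*)-\Phi(\vx^*,\vy^*)=\langle \nabla_{\vx_i}\Phi(\vx^*,\vy^*),\vx_i-\vx_i^*\rangle,
\end{equation*}
and analogously for a maximizer deviation $\vy_j\mapsto\Phi(\vx^*,\vy_j,\vy_{-j}^*)$.

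Finally, I would combine the two steps with the potential identities \eqref{eq:cost-pot} and \eqref{eq:util-pot}. For any minimizer $i$ and deviation $\vx_i$,
\begin{equation*}
c_i(\vx^*,\vy^*)-c_i(\vx_i,\vx_{-i}^*,\vy^*)=\Phi(\vx^*,\vy^*)-\Phi(\vx_i,\vx_{-i}^*,\vy^*)=\langle\nabla_{\vx_i}\Phi(\vx^*,\vy^*),\vx_i^*-\vx_i\rangle\le\epsilon,
\end{equation*}
yielding $c_i(\vx^*,\vy^*)\le c_i(\vx_i,\vx_{-i}^*,\vy^*)+\epsilon$; the symmetric computation for a maximizer $j$ gives $u_j(\vx^*,\vy^*)\ge u_j(\vx^*,\vy_j,\vy_{-j}^*)-\epsilon$. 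These are exactly the defining inequalities of an $\epsilon$-approximate Nash equilibrium for the original adversarial potential game. There is no real obstacle here: the argument is essentially the standard folklore fact that a stationary point of a potential function is an approximate equilibrium, together with the algebraic observation that individual utilities and the potential have identical own-gradients, so the only thing to check carefully is that the joint VI inequality can be instantiated on single-player deviations, which is immediate because $\calX\times\calY$ is a product set.
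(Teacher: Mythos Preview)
Your proof is correct and follows essentially the same approach as the paper: the paper does not spell out a proof for this proposition, merely noting the decompositions $c_i=\Phi+\Psi_i(\vx_{-i},\vy)$ and $u_j=\Phi+\Psi'_j(\vx,\vy_{-j})$ and declaring the proposition as an immediate consequence (in conjunction with the earlier remark that any solution of \eqref{def:vi} is an $\epsilon$-approximate NE of the surrogate game). You have simply written out the implicit steps---specializing the VI to single-block deviations, using multilinearity to convert gradient inner products into function differences, and then invoking the potential identities---which is exactly what the paper intends.
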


\paragraph{Congestion games with adversarial costs.}

A \emph{congestion game}, dating back at least to the work of~\citet{rosenthal73}, is defined by the tuple $\Gamma = (\mathcal{N}; \calE;$ $(\calA_i)_{i \in \mathcal{N}};(c_e)_{e \in \calE})$, where
\begin{itemize}[noitemsep]
    \item $\mathcal{N}$ is the set of \emph{agents} with $n = |\mathcal{N}|$;
    \item $\calE$ is a set of \emph{edges} (also known as \emph{resources} or \emph{bins} or \emph{facilities});
    \item the action space $\calA_i$ of each player $i \in \calN$ is a collection of subsets of $\calE$ ($\calA_i \subseteq 2^{\calE}$); and
    \item $c_e : \N \cup \{0\} \to \R_{> 0}$ is the cost (\emph{e.g.}, latency) function associated with edge $e \in \calE$.
\end{itemize}
More precisely, for an action profile $\vec{a} = (a_1, \dots, a_n) \in \prod_{i=1}^n \calA_i$, the cost of player $i \in \calN$ is given by $c_i(\vec{a}) \defeq \sum_{e \in \calA_i} c_e(\ell_e(\vec{a}))$, where $\ell_e(\vec{a})$ denotes the number of players using $e \in \calE$ under $\vec{a}$---the load of edge $e$.

Now we introduce a twist to this well-studied setting: there is an additional ``adversarial'' player who chooses the cost functions of the edges from a finite family of functions $\mathcal{F}_e = \{c_{e,1},...,c_{e,m_e}\}$, with $|\mathcal{F}_e| = m_e$. We denote by $\vec{b}$ the action of the adversary, so that $b_e \in \calF_e$ denotes the cost function $c_{e,b_e}$ the adversary chose for edge $e \in \calE$. In this context, every agent is aiming at minimizing its cost, while the adversary is aiming at maximizing the potential function $\Phi(\vec{a},\vec{b}) \defeq \sum_{e \in \calE} \sum_{j=0}^{\ell_e(\vec{a})} c_{e,b_e}(j)$. A similar setting is the case of nonatomic congestion games with malicious players presented by \citet{Babaioff09:Congestion}; apart from the players that strive to minimize their cost, part of the network's flow is controlled by a malicious (adversarial) player that attempts to maximize the load experienced by the set of the cost-minimizing players.

This setting can be equivalently formulated as follows: there is a collection of congestion games (with respect to the team players), and the adversary has the power of selecting the underlying congestion game. As such, computing Nash equilibria is easily seen to be cast in the framework of \Cref{sec:cls}. One caveat here is that the complexity of our algorithms depends polynomially on the number of actions, which could be exponential in the description of the congestion game. One promising approach to address this issue is to employ \emph{multiplicative weights} instead of gradient-descent-type dynamics, as in \Cref{alg:gdmax}, in conjunction with the \emph{kernel trick}~\citep{Takimoto03:Takimoto} when the underlying action set is structured~\citep{Beaglehole23:Sampling}.

\section{Conclusions and Open Problems}
\label{sec:conclusion}

Our main contribution in this paper was to establish that computing an approximate Nash equilibrium in adversarial team games is $\CLS$-complete. This further expands the class of problems known to be complete for that class, and extends the recent $\CLS$-completeness of~\citet{babichenko2021settling} (for potential games) to an important class of games that combines both cooperation and competition; we argue that understanding the complexity of such settings serves as an important step to demystify more realistic multi-agent environments. 

While we have focused on the setting where the team is facing a single adversary, the canonical case treated in the literature starting from the work of~\citet{VonStengel97:Team} (recall our overview in~\Cref{sec:related}), a natural question is to investigate the complexity of computing Nash equilibria in the more general setting where both teams have multiple players. In this context, we make the following conjecture.

\begin{conjecture}
    \label{conjecture:PPAD}
    Computing an $\epsilon$-approximate Nash equilibrium, for a sufficiently small $\epsilon > 0$, in two-team zero-sum games with $n, m \gg 1$ is $\PPAD$-complete.
\end{conjecture}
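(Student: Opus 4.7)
Membership in $\PPAD$ is immediate: under \Cref{assumption:polexp}, computing an $\epsilon$-approximate Nash equilibrium of any finite normal-form game reduces to approximating a Brouwer fixed point of the standard Nash map, placing the problem in $\PPAD$~\citep{Daskalakis09:The,papadimitriou1994complexity}. The nontrivial direction is therefore $\PPAD$-hardness, and the plan is a reduction from general-sum bimatrix Nash equilibrium, shown to be $\PPAD$-complete by~\citet{Daskalakis09:The} and~\citet{Chen09:Settling}.

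Given a bimatrix instance $(A, B)$ with $A, B \in \R^{k \times k}$, the idea is to build a two-team zero-sum game $\Gamma(n, m, U)$ in which the single utility $U$ faithfully encodes both payoff matrices through the team structure. A natural first attempt uses two players per team, with $\calA_1 = \calA_2 = [k]$ on team $A$ and $\calB_1 = \calB_2 = [k]$ on team $B$, and a utility of the shape
\begin{equation*}
U(\vx_1, \vx_2, \vy_1, \vy_2) \defeq \vx_1^\top A \vy_2 \;-\; \vx_2^\top B^\top \vy_1 \;+\; R(\vx_1, \vx_2, \vy_1, \vy_2),
\end{equation*}
where $R$ is a multilinear regularizer designed to enforce $\vx_1 \approx \vx_2$ and $\vy_1 \approx \vy_2$ at any approximate equilibrium. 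The zero-sum constraint between the teams then decomposes into two ``half-games,'' in which $\vx_1$ is incentivized to best-respond to $\vy_1$ under $A$ while $\vy_2$ best-responds to $\vx_2$ under $B$; collapsing the two copies via the constraints enforced by $R$ would yield an approximate Nash equilibrium of $(A, B)$.

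The key technical step is to show that any $\epsilon$-approximate Nash equilibrium of $\Gamma$ must satisfy $\|\vx_1 - \vx_2\|_1, \|\vy_1 - \vy_2\|_1 = \poly(\epsilon)$, so that the profile projects down to a $\poly(\epsilon)$-approximate Nash equilibrium of the source bimatrix game. This requires carefully balancing the magnitudes of $A$, $B$, and $R$, and realizing $R$ as a multilinear function of the four marginal strategies, since normal-form utilities cannot natively express arbitrary quadratic penalties. A promising alternative that avoids this encoding difficulty is to route the reduction through the generalized-circuit framework of~\citet{Daskalakis09:The,Rubinstein16:Settling}, implementing each gate as a constant-size team-vs-team gadget and wiring gates together through shared action-components; the correctness of each gadget need then only be verified against product distributions of the participating players.

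The principal obstacle will be ruling out \emph{spurious} Nash equilibria that do not correspond to any solution of the source problem. Because team members are identically-interested but strategically independent, each team's joint strategy is necessarily a product distribution, which forbids the correlated-randomization gadgets that are ubiquitous in existing $\PPAD$-hardness proofs for compactly represented games; every coordinating mechanism must instead be simulated by a lone team member whose unilateral best response already enforces the desired behavior. This design constraint is precisely what makes the case $n, m \gg 1$ qualitatively different from the $m = 1$ case settled in this paper: the linear-programming extension of \Cref{thm:onegeneralize} is a duality-based escape hatch whose symmetric analogue fails when both sides lack coordination, since in general $U_{\textrm{minmax}} > U_{\textrm{maxmin}}$~\citep{SchulmanV19}, and it is precisely this strict duality gap that one should hope to leverage to embed arbitrary fixed-point computations.
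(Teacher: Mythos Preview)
The statement you are attempting to prove is a \emph{conjecture}, not a theorem: the paper explicitly presents it as an open problem in the concluding section and provides no proof. In fact, the surrounding text emphasizes that ``even the complexity status of the problem when $n, m = 2$ remains a challenging open problem,'' and points to the difficulty of going from coupled to product constraints in the hardness construction of~\citet{Daskalakis21:The}. There is therefore no ``paper's own proof'' to compare your proposal against.

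As for the substance of your sketch, it reads as a plausible research plan rather than a proof, and you are candid about this: you identify the central obstacle (ruling out spurious equilibria arising from the product-distribution constraint on each team) without resolving it. The specific gadget you propose---two copies per side with a multilinear regularizer $R$ forcing $\vx_1 \approx \vx_2$ and $\vy_1 \approx \vy_2$---faces a concrete difficulty you gloss over: any multilinear $R(\vx_1,\vx_2,\vy_1,\vy_2)$ that penalizes $\|\vx_1 - \vx_2\|$ must couple $\vx_1$ and $\vx_2$ through the \emph{opponent's} variables (since $R$ is linear in each $\vx_i$ separately and the two belong to the same team), and it is not clear that such a penalty can be made dominant at equilibrium without simultaneously destroying the best-response incentives coming from $A$ and $B$. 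The generalized-circuit route you mention is indeed the more promising avenue, but the paper's authors themselves flag precisely this direction as the major unresolved challenge, so your proposal should be understood as a statement of intent toward an open problem, not a proof.
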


Even the complexity status of the problem when $n, m = 2$ remains a challenging open problem. An important implication of any hardness result for that problem would be on the computation of equilibria in nonconvex-nonconcave min-max optimization~\citep{Daskalakis21:The}, given that computing equilibria in two-team zero-sum games can be viewed as a special case of the former problem. In particular, the hardness of~\citet{Daskalakis21:The} rests on the assumption that the players' strategy sets are coupled, while answering~\Cref{conjecture:PPAD} necessitates going from coupled to product constraints, which is recognized as a major challenge. An even more basic question: Is there a $\PTAS$ for computing Nash equilibria when both teams have multiple players? 

Returning to the canonical case of a single adversary, it would be interesting to understand the complexity of computing (approximate) Nash equilibria when $n = O(1)$. While our $\CLS$ membership clearly applies in that case, one can no longer rely on the work of~\citet{babichenko2021settling} to obtain a non-trivial hardness result given that Nash equilibria in identical-interest games with a constant number of players can be trivially computed in polynomial time (by simply identifying the maximum entry in the payoff tensor). In fact, even the case where $n = 2$ is a challenging open problem.

Moreover, the situation becomes even bleaker when insisting on an exact equilibrium, since one has to deal with the inherent irrationality of equilibria in adversarial team games~\citep{VonStengel97:Team}. For this reason, $\FIXP$, introduced by~\citet{Etessami10:On}, seems to be a plausible candidate for capturing the complexity of the problem:

\begin{question}
    \label{conj:FIXP}
    Is computing an exact Nash equilibrium in adversarial team games $\FIXP$-complete?
\end{question}

\textcolor{black}{ There is some negative---albeit somewhat superficial---evidence against \FIXP-completeness: when $\FIXP$ is viewed as the counterpart of $\PPAD$ for the exact version of the underlying problem, then one could also conjecture that computing an exact Nash equilibrium in adversarial team games is instead complete for a more benign class than \FIXP---one in correspondence to \CLS. In fact, a simpler but yet long-standing open question in this area is to understand the complexity of computing an exact (mixed) Nash equilibria in potential games.}

Finally, our focus in this paper has been on games represented (succinctly) in normal form. It would be interesting to extend our $\CLS$ membership to other classes of games which are not of \emph{polynomial type}, in that the number of actions of each player is superpolynomial, most notably to extensive-form games. One major challenge that arises there is that the per-iteration complexity of gradient-based dynamics is no longer polynomial, thereby necessitating a more refined approach that leverages the underlying structure.

\section*{Acknowledgements}

We are grateful to the anonymous reviewers at the conference on Economics and Computation (EC) 2023 for their valuable feedback, and in particular for spotting a technical inconsistency in an earlier version of this manuscript. We also thank Alexandros Hollender for insightful discussions regarding \Cref{remark:TM}; all errors remain our own.

\bibliography{main}

\begin{thebibliography}{62}
\providecommand{\natexlab}[1]{#1}
\providecommand{\url}[1]{\texttt{#1}}
\expandafter\ifx\csname urlstyle\endcsname\relax
  \providecommand{\doi}[1]{doi: #1}\else
  \providecommand{\doi}{doi: \begingroup \urlstyle{rm}\Url}\fi

\bibitem[Adler(2013)]{Adler13:The}
Ilan Adler.
\newblock The equivalence of linear programs and zero-sum games.
\newblock \emph{Int. J. Game Theory}, 42\penalty0 (1):\penalty0 165--177, 2013.

\bibitem[Babaioff et~al.(2009)Babaioff, Kleinberg, and
  Papadimitriou]{Babaioff09:Congestion}
Moshe Babaioff, Robert Kleinberg, and Christos~H. Papadimitriou.
\newblock Congestion games with malicious players.
\newblock \emph{Games and Economic Behavior}, 67\penalty0 (1):\penalty0 22--35,
  2009.

\bibitem[Babichenko and Rubinstein(2021)]{babichenko2021settling}
Yakov Babichenko and Aviad Rubinstein.
\newblock Settling the complexity of {N}ash equilibrium in congestion games.
\newblock In \emph{Proceedings of the 53rd Annual ACM SIGACT Symposium on
  Theory of Computing}, pages 1426--1437, 2021.

\bibitem[Basilico et~al.(2017)Basilico, Celli, Nittis, and
  Gatti]{Basilico17:Team}
Nicola Basilico, Andrea Celli, Giuseppe~De Nittis, and Nicola Gatti.
\newblock Team-maxmin equilibrium: Efficiency bounds and algorithms.
\newblock In \emph{Proceedings of the Thirty-First {AAAI} Conference on
  Artificial Intelligence}, pages 356--362. {AAAI} Press, 2017.

\bibitem[Beaglehole et~al.(2023)Beaglehole, Hopkins, Kane, Liu, and
  Lovett]{Beaglehole23:Sampling}
Daniel Beaglehole, Max Hopkins, Daniel Kane, Sihan Liu, and Shachar Lovett.
\newblock Sampling equilibria: Fast no-regret learning in structured games.
\newblock In \emph{Proceedings of the 2023 {ACM-SIAM} Symposium on Discrete
  Algorithms, {SODA} 2023}, pages 3817--3855. {SIAM}, 2023.

\bibitem[Borgs et~al.(2010)Borgs, Chayes, Immorlica, Kalai, Mirrokni, and
  Papadimitriou]{Borgs10:The}
Christian Borgs, Jennifer~T. Chayes, Nicole Immorlica, Adam~Tauman Kalai,
  Vahab~S. Mirrokni, and Christos~H. Papadimitriou.
\newblock The myth of the folk theorem.
\newblock \emph{Games Econ. Behav.}, 70\penalty0 (1):\penalty0 34--43, 2010.

\bibitem[Carminati et~al.(2022)Carminati, Cacciamani, Ciccone, and
  Gatti]{Carminati22:A}
Luca Carminati, Federico Cacciamani, Marco Ciccone, and Nicola Gatti.
\newblock A marriage between adversarial team games and 2-player games:
  Enabling abstractions, no-regret learning, and subgame solving.
\newblock In \emph{International Conference on Machine Learning, {ICML} 2022},
  volume 162 of \emph{Proceedings of Machine Learning Research}, pages
  2638--2657. {PMLR}, 2022.

\bibitem[Celli and Gatti(2018)]{Celli18:Computational}
Andrea Celli and Nicola Gatti.
\newblock Computational results for extensive-form adversarial team games.
\newblock In \emph{Proceedings of the Thirty-Second {AAAI} Conference on
  Artificial Intelligence, (AAAI-18), 2018}, pages 965--972. {AAAI} Press,
  2018.

\bibitem[Celli et~al.(2019)Celli, Ciccone, Bongo, and
  Gatti]{Celli19:Coordination}
Andrea Celli, Marco Ciccone, Raffaele Bongo, and Nicola Gatti.
\newblock Coordination in adversarial sequential team games via multi-agent
  deep reinforcement learning.
\newblock \emph{CoRR}, abs/1912.07712, 2019.

\bibitem[Chastain et~al.(2014)Chastain, Livnat, Papadimitriou, and
  Vazirani]{Chastain14:Algorithms}
Erick Chastain, Adi Livnat, Christos~H. Papadimitriou, and Umesh~V. Vazirani.
\newblock Algorithms, games, and evolution.
\newblock \emph{Proceedings of the National Academy of Sciences {USA}},
  111\penalty0 (29):\penalty0 10620--10623, 2014.

\bibitem[Chen et~al.(2009)Chen, Deng, and Teng]{Chen09:Settling}
Xi~Chen, Xiaotie Deng, and Shang{-}Hua Teng.
\newblock Settling the complexity of computing two-player {N}ash equilibria.
\newblock \emph{J. {ACM}}, 56\penalty0 (3):\penalty0 14:1--14:57, 2009.

\bibitem[Chu and Halpern(2001)]{Chu01:On}
Francis~C. Chu and Joseph~Y. Halpern.
\newblock On the {NP}-completeness of finding an optimal strategy in games with
  common payoffs.
\newblock \emph{Int. J. Game Theory}, 30\penalty0 (1):\penalty0 99--106, 2001.

\bibitem[Coase(1937)]{Coase37:The}
R.~H. Coase.
\newblock The nature of the firm.
\newblock \emph{Economica}, 4\penalty0 (16):\penalty0 386--405, 1937.

\bibitem[Daskalakis(2013)]{Daskalakis13:On}
Constantinos Daskalakis.
\newblock On the complexity of approximating a {N}ash equilibrium.
\newblock \emph{{ACM} Trans. Algorithms}, 9\penalty0 (3):\penalty0 23:1--23:35,
  2013.

\bibitem[Daskalakis and Papadimitriou(2011)]{Daskalakis11:Continuous}
Constantinos Daskalakis and Christos~H. Papadimitriou.
\newblock Continuous local search.
\newblock In Dana Randall, editor, \emph{Proceedings of the Twenty-Second
  Annual {ACM-SIAM} Symposium on Discrete Algorithms, {SODA} 2011}, pages
  790--804. {SIAM}, 2011.

\bibitem[Daskalakis et~al.(2006)Daskalakis, Fabrikant, and
  Papadimitriou]{Daskalakis06:The}
Constantinos Daskalakis, Alex Fabrikant, and Christos~H. Papadimitriou.
\newblock The game world is flat: The complexity of nash equilibria in succinct
  games.
\newblock In \emph{Automata, Languages and Programming, 33rd International
  Colloquium, {ICALP} 2006}, volume 4051 of \emph{Lecture Notes in Computer
  Science}, pages 513--524. Springer, 2006.

\bibitem[Daskalakis et~al.(2009)Daskalakis, Goldberg, and
  Papadimitriou]{Daskalakis09:The}
Constantinos Daskalakis, Paul~W. Goldberg, and Christos~H. Papadimitriou.
\newblock The complexity of computing a {N}ash equilibrium.
\newblock \emph{{SIAM} J. Comput.}, 39\penalty0 (1):\penalty0 195--259, 2009.

\bibitem[Daskalakis et~al.(2018)Daskalakis, Tzamos, and
  Zampetakis]{Daskalakis18:Converse}
Constantinos Daskalakis, Christos Tzamos, and Manolis Zampetakis.
\newblock A converse to banach's fixed point theorem and its
  {CLS}-completeness.
\newblock In \emph{Proceedings of the 50th Annual ACM SIGACT Symposium on
  Theory of Computing}, pages 44--50, 2018.

\bibitem[Daskalakis et~al.(2020)Daskalakis, Foster, and
  Golowich]{Daskalakis20:Independent}
Constantinos Daskalakis, Dylan~J. Foster, and Noah Golowich.
\newblock Independent policy gradient methods for competitive reinforcement
  learning.
\newblock In \emph{Advances in Neural Information Processing Systems 33: Annual
  Conference on Neural Information Processing Systems 2020}, 2020.

\bibitem[Daskalakis et~al.(2021)Daskalakis, Skoulakis, and
  Zampetakis]{Daskalakis21:The}
Constantinos Daskalakis, Stratis Skoulakis, and Manolis Zampetakis.
\newblock The complexity of constrained min-max optimization.
\newblock In \emph{{STOC} '21: 53rd Annual {ACM} {SIGACT} Symposium on Theory
  of Computing, 2021}, pages 1466--1478. {ACM}, 2021.

\bibitem[Davis and Drusvyatskiy(2018)]{Davis18:Stochastic}
Damek Davis and Dmitriy Drusvyatskiy.
\newblock Stochastic subgradient method converges at the rate {O}($k^{-1/4})$)
  on weakly convex functions.
\newblock \emph{CoRR}, abs/1802.02988, 2018.

\bibitem[Etessami and Yannakakis(2010)]{Etessami10:On}
Kousha Etessami and Mihalis Yannakakis.
\newblock On the complexity of {N}ash equilibria and other fixed points.
\newblock \emph{{SIAM} J. Comput.}, 39\penalty0 (6):\penalty0 2531--2597, 2010.

\bibitem[Farina et~al.(2018)Farina, Celli, Gatti, and Sandholm]{Farina18:Ex}
Gabriele Farina, Andrea Celli, Nicola Gatti, and Tuomas Sandholm.
\newblock Ex ante coordination and collusion in zero-sum multi-player
  extensive-form games.
\newblock In \emph{Advances in Neural Information Processing Systems 31: Annual
  Conference on Neural Information Processing Systems 2018, NeurIPS 2018},
  pages 9661--9671, 2018.

\bibitem[Farina et~al.(2021)Farina, Celli, Gatti, and
  Sandholm]{Farina21:Connecting}
Gabriele Farina, Andrea Celli, Nicola Gatti, and Tuomas Sandholm.
\newblock Connecting optimal ex-ante collusion in teams to extensive-form
  correlation: Faster algorithms and positive complexity results.
\newblock In \emph{Proceedings of the 38th International Conference on Machine
  Learning, {ICML} 2021}, volume 139 of \emph{Proceedings of Machine Learning
  Research}, pages 3164--3173. {PMLR}, 2021.

\bibitem[Fearnley et~al.(2017)Fearnley, Gordon, Mehta, and
  Savani]{Fearnley17:Cls}
John Fearnley, Spencer Gordon, Ruta Mehta, and Rahul Savani.
\newblock {CLS}: new problems and completeness.
\newblock \emph{arXiv preprint arXiv:1702.06017}, 2017.

\bibitem[Fearnley et~al.(2020)Fearnley, Gordon, Mehta, and
  Savani]{Fearnley20:Unique}
John Fearnley, Spencer Gordon, Ruta Mehta, and Rahul Savani.
\newblock Unique end of potential line.
\newblock \emph{Journal of Computer and System Sciences}, 114:\penalty0 1--35,
  2020.

\bibitem[Fearnley et~al.(2021)Fearnley, Goldberg, Hollender, and
  Savani]{Fearnley21:The}
John Fearnley, Paul~W. Goldberg, Alexandros Hollender, and Rahul Savani.
\newblock The complexity of gradient descent: {CLS} = {PPAD} {\(\cap\)} {PLS}.
\newblock In \emph{{STOC} '21: 53rd Annual {ACM} {SIGACT} Symposium on Theory
  of Computing}, pages 46--59. {ACM}, 2021.

\bibitem[G{\"a}rtner et~al.(2018)G{\"a}rtner, Hansen, Hub{\'a}{\v{c}}ek,
  Kr{\'a}l, Mosaad, and Sl{\'\i}vov{\'a}]{Gartner18:Arrival}
Bernd G{\"a}rtner, Thomas~Dueholm Hansen, Pavel Hub{\'a}{\v{c}}ek, Karel
  Kr{\'a}l, Hagar Mosaad, and Veronika Sl{\'\i}vov{\'a}.
\newblock Arrival: next stop in {CLS}.
\newblock \emph{arXiv preprint arXiv:1802.07702}, 2018.

\bibitem[G{\"o}{\"o}s et~al.(2018)G{\"o}{\"o}s, Kamath, Robere, and
  Sokolov]{Goos18:Adventures}
Mika G{\"o}{\"o}s, Pritish Kamath, Robert Robere, and Dmitry Sokolov.
\newblock {Adventures in Monotone Complexity and TFNP}.
\newblock In Avrim Blum, editor, \emph{10th Innovations in Theoretical Computer
  Science Conference (ITCS 2019)}, volume 124 of \emph{Leibniz International
  Proceedings in Informatics (LIPIcs)}, pages 38:1--38:19, Dagstuhl, Germany,
  2018. Schloss Dagstuhl--Leibniz-Zentrum fuer Informatik.

\bibitem[G{\"{o}}{\"{o}}s et~al.(2022{\natexlab{a}})G{\"{o}}{\"{o}}s,
  Hollender, Jain, Maystre, Pires, Robere, and Tao]{Goos22:Further}
Mika G{\"{o}}{\"{o}}s, Alexandros Hollender, Siddhartha Jain, Gilbert Maystre,
  William Pires, Robert Robere, and Ran Tao.
\newblock Further collapses in {TFNP}.
\newblock In Shachar Lovett, editor, \emph{37th Computational Complexity
  Conference, {CCC} 2022}, volume 234 of \emph{LIPIcs}, pages 33:1--33:15.
  Schloss Dagstuhl - Leibniz-Zentrum f{\"{u}}r Informatik, 2022{\natexlab{a}}.

\bibitem[G{\"{o}}{\"{o}}s et~al.(2022{\natexlab{b}})G{\"{o}}{\"{o}}s,
  Hollender, Jain, Maystre, Pires, Robere, and Tao]{Goos22:Seperations}
Mika G{\"{o}}{\"{o}}s, Alexandros Hollender, Siddhartha Jain, Gilbert Maystre,
  William Pires, Robert Robere, and Ran Tao.
\newblock Separations in proof complexity and {TFNP}.
\newblock \emph{Electron. Colloquium Comput. Complex.}, {TR22-058},
  2022{\natexlab{b}}.

\bibitem[Gr{\"o}tschel et~al.(2012)Gr{\"o}tschel, Lov{\'a}sz, and
  Schrijver]{grotschel2012geometric}
Martin Gr{\"o}tschel, L{\'a}szl{\'o} Lov{\'a}sz, and Alexander Schrijver.
\newblock \emph{Geometric algorithms and combinatorial optimization}, volume~2.
\newblock Springer Science \& Business Media, 2012.

\bibitem[Guan et~al.(2023)Guan, Afshari, and Tsiotras]{Guan23:Zero}
Yue Guan, Mohammad Afshari, and Panagiotis Tsiotras.
\newblock Zero-sum games between mean-field teams: {A} common information and
  reachability based analysis.
\newblock \emph{CoRR}, abs/2303.12243, 2023.

\bibitem[Hansen et~al.(2008)Hansen, Hansen, Miltersen, and
  S{\o}rensen]{Hansen08:Approximability}
Kristoffer~Arnsfelt Hansen, Thomas~Dueholm Hansen, Peter~Bro Miltersen, and
  Troels~Bjerre S{\o}rensen.
\newblock Approximability and parameterized complexity of minmax values.
\newblock In \emph{Internet and Network Economics, 4th International Workshop,
  {WINE} 2008}, volume 5385 of \emph{Lecture Notes in Computer Science}, pages
  684--695. Springer, 2008.

\bibitem[Hoeffding and Wolfowitz(1958)]{Hoeffding58:Dis}
Wassily Hoeffding and J.~Wolfowitz.
\newblock Distinguishability of sets of distributions.
\newblock \emph{The Annals of Mathematical Statistics}, 29\penalty0
  (3):\penalty0 700--718, 1958.

\bibitem[Hub{\'a}cek and Yogev(2020)]{Hubacek20:Hardness}
Pavel Hub{\'a}cek and Eylon Yogev.
\newblock Hardness of continuous local search: Query complexity and
  cryptographic lower bounds.
\newblock \emph{SIAM Journal on Computing}, 49\penalty0 (6):\penalty0
  1128--1172, 2020.

\bibitem[Johnson et~al.(1988)Johnson, Papadimitriou, and
  Yannakakis]{johnson1988easy}
David~S Johnson, Christos~H Papadimitriou, and Mihalis Yannakakis.
\newblock How easy is local search?
\newblock \emph{Journal of computer and system sciences}, 37\penalty0
  (1):\penalty0 79--100, 1988.

\bibitem[Kalogiannis et~al.(2022)Kalogiannis, Anagnostides, Panageas,
  Vlatakis-Gkaragkounis, Chatziafratis, and
  Stavroulakis]{Kalogiannis22:Efficiently}
Fivos Kalogiannis, Ioannis Anagnostides, Ioannis Panageas, Emmanouil-Vasileios
  Vlatakis-Gkaragkounis, Vaggos Chatziafratis, and Stelios Stavroulakis.
\newblock Efficiently computing {N}ash equilibria in adversarial team {M}arkov
  games, 2022.

\bibitem[Kaneko and Kline(1995)]{Kaneko95:Behavior}
Mamoru Kaneko and Jeffrey Kline.
\newblock Behavior strategies, mixed strategies and perfect recall.
\newblock \emph{International Journal of Game Theory}, 24:\penalty0 127--45, 06
  1995.

\bibitem[Kartik et~al.(2022)Kartik, Nayyar, and Mitra]{Kartik22:Common}
Dhruva Kartik, Ashutosh Nayyar, and Urbashi Mitra.
\newblock Common information belief based dynamic programs for stochastic
  zero-sum games with competing teams.
\newblock In \emph{American Control Conference, {ACC} 2022}, pages 605--612.
  {IEEE}, 2022.

\bibitem[Lagoudakis and Parr(2002)]{Lagoudakis02:Learning}
Michail~G. Lagoudakis and Ronald Parr.
\newblock Learning in zero-sum team {M}arkov games using factored value
  functions.
\newblock In Suzanna Becker, Sebastian Thrun, and Klaus Obermayer, editors,
  \emph{Advances in Neural Information Processing Systems}, pages 1627--1634.
  {MIT} Press, 2002.

\bibitem[Lee et~al.(2015)Lee, Sidford, and Wong]{lee2015faster}
Yin~Tat Lee, Aaron Sidford, and Sam Chiu-wai Wong.
\newblock A faster cutting plane method and its implications for combinatorial
  and convex optimization.
\newblock In \emph{2015 IEEE 56th Annual Symposium on Foundations of Computer
  Science}, pages 1049--1065. IEEE, 2015.

\bibitem[Lin et~al.(2020)Lin, Jin, and Jordan]{lin2020gradient}
Tianyi Lin, Chi Jin, and Michael Jordan.
\newblock On gradient descent ascent for nonconvex-concave minimax problems.
\newblock In \emph{International Conference on Machine Learning}, pages
  6083--6093. PMLR, 2020.

\bibitem[Marschak(1955)]{Marschak55:Elements}
J.~Marschak.
\newblock Elements for a theory of teams.
\newblock \emph{Management Science}, 1\penalty0 (2):\penalty0 127--137, 1955.

\bibitem[Nash(1951)]{Nash51:Non}
John Nash.
\newblock Non-cooperative games.
\newblock \emph{Annals of Mathematics}, 54\penalty0 (2):\penalty0 286--295,
  1951.

\bibitem[Ohta(2002{\natexlab{a}})]{Ohta02:Near}
Tomoko Ohta.
\newblock Near-neutrality in evolution of genes and gene regulation.
\newblock \emph{Proceedings of the National Academy of Sciences {USA}},
  99\penalty0 (25):\penalty0 16134--16137, 2002{\natexlab{a}}.

\bibitem[Ohta(2002{\natexlab{b}})]{Tomoko02:Near}
Tomoko Ohta.
\newblock Near-neutrality in evolution of genes and gene regulation.
\newblock \emph{Proceedings of the National Academy of Sciences}, 99\penalty0
  (25):\penalty0 16134--16137, 2002{\natexlab{b}}.

\bibitem[Papadimitriou(1994)]{papadimitriou1994complexity}
Christos~H Papadimitriou.
\newblock On the complexity of the parity argument and other inefficient proofs
  of existence.
\newblock \emph{Journal of Computer and system Sciences}, 48\penalty0
  (3):\penalty0 498--532, 1994.

\bibitem[Papadimitriou and Roughgarden(2008)]{Papadimitriou08:Computing}
Christos~H. Papadimitriou and Tim Roughgarden.
\newblock Computing correlated equilibria in multi-player games.
\newblock \emph{J. {ACM}}, 55\penalty0 (3):\penalty0 14:1--14:29, 2008.

\bibitem[Piccione and Rubinstein(1997)]{Piccione97:On}
Michele Piccione and Ariel Rubinstein.
\newblock On the interpretation of decision problems with imperfect recall.
\newblock \emph{Games and Economic Behavior}, 20\penalty0 (1):\penalty0 3--24,
  1997.

\bibitem[Rosenthal(1973)]{rosenthal73}
R.W. Rosenthal.
\newblock A class of games possessing pure-strategy {N}ash equilibria.
\newblock \emph{International Journal of Game Theory}, 2\penalty0 (1):\penalty0
  65--67, 1973.

\bibitem[Rubinstein(2016)]{Rubinstein16:Settling}
Aviad Rubinstein.
\newblock Settling the complexity of computing approximate two-player {N}ash
  equilibria.
\newblock In Irit Dinur, editor, \emph{{IEEE} 57th Annual Symposium on
  Foundations of Computer Science, {FOCS} 2016}, pages 258--265. {IEEE}
  Computer Society, 2016.

\bibitem[Schulman and Vazirani(2019)]{SchulmanV19}
Leonard~J. Schulman and Umesh~V. Vazirani.
\newblock The duality gap for two-team zero-sum games.
\newblock \emph{Games Econ. Behav.}, 115:\penalty0 336--345, 2019.

\bibitem[Takimoto and Warmuth(2003)]{Takimoto03:Takimoto}
Eiji Takimoto and Manfred~K. Warmuth.
\newblock Path kernels and multiplicative updates.
\newblock \emph{J. Mach. Learn. Res.}, 4:\penalty0 773--818, 2003.

\bibitem[Tewolde et~al.(2023)Tewolde, Oesterheld, Conitzer, and
  Goldberg]{Tewolde23:Computational}
Emanuel Tewolde, Caspar Oesterheld, Vincent Conitzer, and Paul~W. Goldberg.
\newblock The computational complexity of single-player imperfect-recall games,
  2023.

\bibitem[von Neumann and Morgenstern(2007)]{Von07:Theory}
John von Neumann and Oskar Morgenstern.
\newblock \emph{Theory of Games and Economic Behavior (60th-Anniversary
  Edition)}.
\newblock Princeton University Press, 2007.

\bibitem[{von Stengel} and Koller(1997)]{VonStengel97:Team}
Bernhard {von Stengel} and Daphne Koller.
\newblock Team-maxmin equilibria.
\newblock \emph{Games and Economic Behavior}, 21\penalty0 (1):\penalty0
  309--321, 1997.

\bibitem[Zhang and Sandholm(2021)]{Zhang21:Team}
Brian~Hu Zhang and Tuomas Sandholm.
\newblock Team correlated equilibria in zero-sum extensive-form games via tree
  decompositions.
\newblock \emph{CoRR}, abs/2109.05284, 2021.

\bibitem[Zhang et~al.(2022)Zhang, Farina, and Sandholm]{Zhang22:Team}
Brian~Hu Zhang, Gabriele Farina, and Tuomas Sandholm.
\newblock Team belief {DAG} form: {A} concise representation for
  team-correlated game-theoretic decision making.
\newblock \emph{CoRR}, abs/2202.00789, 2022.

\bibitem[Zhang and An(2020{\natexlab{a}})]{Converging20:Zhang}
Youzhi Zhang and Bo~An.
\newblock Converging to team-maxmin equilibria in zero-sum multiplayer games.
\newblock In \emph{Proceedings of the 37th International Conference on Machine
  Learning, {ICML} 2020}, volume 119, pages 11033--11043. {PMLR},
  2020{\natexlab{a}}.

\bibitem[Zhang and An(2020{\natexlab{b}})]{Zhang20:Computing}
Youzhi Zhang and Bo~An.
\newblock Computing team-maxmin equilibria in zero-sum multiplayer
  extensive-form games.
\newblock In \emph{The Thirty-Fourth {AAAI} Conference on Artificial
  Intelligence, {AAAI} 2020}, pages 2318--2325. {AAAI} Press,
  2020{\natexlab{b}}.

\bibitem[Zhang et~al.(2021)Zhang, An, and Cern{\'{y}}]{Zhang21:Computing}
Youzhi Zhang, Bo~An, and Jakub Cern{\'{y}}.
\newblock Computing ex ante coordinated team-maxmin equilibria in zero-sum
  multiplayer extensive-form games.
\newblock In \emph{Thirty-Fifth {AAAI} Conference on Artificial Intelligence,
  {AAAI} 2021}, pages 5813--5821. {AAAI} Press, 2021.

\end{thebibliography}

\appendix
\section{Omitted Proofs}
\label{appendix:proofs}

In this section, we provide the omitted proofs of some simple claims we made in the main body. For the convenience of the reader, we will restate each claim before proceeding with its proof. We commence with \Cref{lemma:Lipcon}.

\begin{lemma}
    \label{lemma:Lipcon}
    Suppose that $|U(\vec{a}, \vec{b})| \leq V$ for any $(\vec{a}, \vec{b}) \in \calA \times \calB$. Then, for any $(\vx, \vy), (\vx', \vy') \in \calX \times \calY$, 
    \begin{equation*}
        | U(\vx, \vy) - U(\vx', \vy')| \leq V \sqrt{ \sum_{i=1}^n |\calA_i| + |\calB|} \|(\vx, \vy) - (\vx', \vy')\|_2.
    \end{equation*}
\end{lemma}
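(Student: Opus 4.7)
The plan is to bound the Euclidean norm of the gradient of $U$ uniformly over the domain $\calX \times \calY$, then apply the mean value theorem along the line segment connecting $(\vx,\vy)$ and $(\vx',\vy')$, which lies entirely in $\calX \times \calY$ by convexity of a product of simplices.

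First, I would exploit the multilinearity of $U$ in the mixed strategies. By expanding $U(\vx,\vy) = \sum_{\vec{a} \in \calA, \vec{b} \in \calB} \left(\prod_{i} x_i(a_i)\right)\left(\prod_j y_j(b_j)\right) U(\vec{a},\vec{b})$, each partial derivative has a clean combinatorial meaning: $\partial U / \partial x_i(a_i) = \E_{\vec{a}_{-i} \sim \vx_{-i},\ \vec{b} \sim \vy}[U(a_i,\vec{a}_{-i},\vec{b})]$ and similarly for $\partial U/\partial y_j(b_j)$. Since $|U(\vec{a},\vec{b})| \leq V$ pointwise, each partial derivative is bounded in absolute value by $V$, so $\|\nabla U(\vx,\vy)\|_\infty \leq V$.

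Converting to the Euclidean norm, the total number of coordinates of $(\vx,\vy)$ is $\sum_{i=1}^n |\calA_i| + \sum_{j=1}^m |\calB_j|$, and in the single-adversary case $\sum_j |\calB_j| = |\calB|$, hence $\|\nabla U(\vx,\vy)\|_2 \leq V\sqrt{\sum_{i=1}^n |\calA_i| + |\calB|}$. Finally, applying the mean value theorem along the segment $t \mapsto (1-t)(\vx,\vy) + t(\vx',\vy')$, which remains in $\calX \times \calY$ by convexity of each simplex factor, gives
\begin{equation*}
    |U(\vx,\vy) - U(\vx',\vy')| \leq \sup_{(\tilde{\vx},\tilde{\vy})} \|\nabla U(\tilde{\vx},\tilde{\vy})\|_2 \cdot \|(\vx,\vy) - (\vx',\vy')\|_2,
\end{equation*}
and combining with the uniform gradient bound yields the claim. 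No step here is an obstacle; the only mild subtlety is confirming that the line segment stays inside the domain, which follows immediately from convexity of the product of simplices.
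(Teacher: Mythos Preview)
Your proof is correct but proceeds along a different line than the paper's. You bound $\|\nabla U\|_\infty \le V$ coordinate-wise via the multilinear expansion, convert to an $\ell_2$ gradient bound, and invoke the mean value theorem along the segment in the convex domain. The paper instead writes $|U(\vx,\vy)-U(\vx',\vy')|$ as $V$ times (an upper bound on) the $\ell_1$ distance between the two induced product distributions on $\calA\times\calB$, then applies the classical fact that the total variation distance between product measures is bounded by the sum of the marginal total variation distances~\citep{Hoeffding58:Dis}, and finally passes from $\ell_1$ to $\ell_2$. Your route is arguably more self-contained since it avoids the TV-of-products lemma, and it also makes explicit the gradient bound $\|\nabla U\|_2 \le V\sqrt{\sum_i |\calA_i|+|\calB|}$, which is independently useful elsewhere in the analysis. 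The paper's route, on the other hand, is a one-line consequence of a standard probabilistic inequality and does not require differentiability or the segment-in-domain observation. Both arguments implicitly use $m=1$ to identify $|\calB|$ with the number of $\vy$-coordinates, which you correctly flag.
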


\begin{proof}
By definition, we have
    \begin{align}
        | U(\vx, \vy) - U(\vx', \vy')| &= \left| \E_{(\vec{a}, \vec{b}) \sim (\vx, \vy)} U(\vec{a}, \vec{b}) - \E_{(\vec{a}, \vec{b}) \sim (\vx', \vy')} U(\vec{a}, \vec{b}) \right| \notag \\
        &= \left| \sum_{(\vec{a}, \vec{b}) \in \calA \times \calB } U(\vec{a}, \vec{b}) \left( \prod_{i \in \calN} x_i(a_i) \prod_{j \in \calM} y_j(b_j) -  \prod_{i \in \calN} x'_i(a_i) \prod_{j \in \calM} y'_j(b_j) \right) \right| \notag \\
        &\leq V \left| \sum_{(\vec{a}, \vec{b}) \in \calA \times \calB } \left( \prod_{i \in \calN} x_i(a_i) \prod_{j \in \calM} y_j(b_j) -  \prod_{i \in \calN} x'_i(a_i) \prod_{j \in \calM} y'_j(b_j) \right) \right| \label{eq:tri} \\
        &\leq V \left( \sum_{i \in \calN} \|\vx_i - \vx_i'\|_1 + \sum_{j \in \calM} \|\vy_j - \vy_j'\|_1 \right) \label{eq:tv} \\
        &\leq V \sqrt{ \sum_{i=1}^n |\calA_i| + |\calB|} \|(\vx, \vy) - (\vx', \vy')\|_2, \label{eq:equiv-norm}
    \end{align}
    where \eqref{eq:tri} follows by the triangle inequality and the assumption that $|U(\vec{a}, \vec{b})| \leq V$; \eqref{eq:tv} uses the property that the total variation distance between two product distributions is the sum of the total variation distance of the marginals~\citep{Hoeffding58:Dis}; and \eqref{eq:equiv-norm} uses the equivalence between the $\ell_1$ and the $\ell_2$ norm, \textcolor{black}{and the fact that $|\calM| = 1$ in adversarial team games}.
\end{proof}

\begin{lemma}[Auxiliary]
\label{lem:moreaulinear} 
Let $\vx^* \in \calX$ be an $\epsilon$-approximate stationary point of $\phi_{1/(2\ell)}$, and $\hat{\vx} = \arg \min_{\vx'\in \calX} \max_{\vy \in \calY} U(\vx',\vy)+\ell\norm{\vx^*-\vx'}_2^2$. Further, we define the function $w : \calX \ni \vx \mapsto \max_{\vy\in\calY} \sum_{i=1}^n U(\vx_i,\hat{\vx}_{-i},\vy)$, and we set $w_{1/(2\ell)}(\vx)$ to be the Moreau envelope of $w$. Then,
it holds that 
$$  \textcolor{black}{\norm{ \nabla w_{1/(2\ell)}(\hat{\vx})}_2  \leq 4 \epsilon.}$$
\end{lemma}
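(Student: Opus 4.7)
The strategy is to show that $\hat{\vx}$ is $2\epsilon$-stationary for $w_{1/(2\ell)}$ by directly bounding $\|\nabla w_{1/(2\ell)}(\hat{\vx})\|_2$. Recall that $\nabla w_{1/(2\ell)}(\hat{\vx}) = 2\ell(\hat{\vx} - \vx^\dagger)$, where $\vx^\dagger \defeq \arg\min_{\vx' \in \calX}\{w(\vx') + \ell\|\vx' - \hat{\vx}\|_2^2\}$. Since for any $\vec{\delta}$ with $\|\vec{\delta}\|_2 \leq 1$ we have $\vec{\delta}^\top \nabla w_{1/(2\ell)}(\hat{\vx}) \geq -2\ell\|\hat{\vx} - \vx^\dagger\|_2$ by Cauchy--Schwarz, it suffices to prove $\|\hat{\vx} - \vx^\dagger\|_2 \leq \epsilon/\ell$.

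The first key ingredient is that $w$ is in fact \emph{convex}: each summand $U(\vx_i, \hat{\vx}_{-i}, \vy)$ is linear in $\vx_i$ and independent of $\vx_j$ for $j\neq i$ by multilinearity of $U$, so $\sum_i U(\vx_i,\hat{\vx}_{-i},\vy)$ is affine in $\vx$, and a pointwise maximum of affine functions is convex. The second key ingredient is the identity $\partial w(\hat{\vx}) = \partial \phi(\hat{\vx})$: applying Danskin's theorem (cf.\ the proof of \Cref{th:PLS-part1}) to both $w$ and $\phi$ at $\hat{\vx}$, one observes that at $\vx=\hat{\vx}$ the inner objective of $w$ collapses to $n\cdot U(\hat{\vx},\vy)$, so the two argmax sets coincide, and for each maximizer $\vy^\star$ the block-gradient $\nabla_{\vx_i} U(\vx_i, \hat{\vx}_{-i}, \vy^\star)|_{\vx_i=\hat{\vx}_i}$ matches the $i$-th block of $\nabla_{\vx} U(\hat{\vx},\vy^\star)$.

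I would then extract an appropriate subgradient from the optimality of $\hat{\vx}$. Since $\hat{\vx}$ minimizes the ($\ell$-strongly convex, by \Cref{lemma:weaklyconvex}) function $\phi(\vx') + \ell\|\vx' - \vx^*\|_2^2$ over $\calX$, first-order optimality yields some $g \in \partial \phi(\hat{\vx})$ and $v \in N_\calX(\hat{\vx})$ with $g + 2\ell(\hat{\vx} - \vx^*) + v = 0$. Pairing with any $\vec{\delta} \in T_\calX(\hat{\vx})$ (on which $\langle v, \cdot\rangle \leq 0$) and combining with the bound $\|\hat{\vx} - \vx^*\|_2 \leq \epsilon/(2\ell)$ from \Cref{lemma:close-prox} gives $\langle g, \vec{\delta}\rangle \geq -\epsilon\|\vec{\delta}\|_2$. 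Convexity of $w$ together with $g \in \partial w(\hat{\vx})$ then yields $w(\vx^\dagger) \geq w(\hat{\vx}) + \langle g, \vx^\dagger - \hat{\vx}\rangle$, while the optimality of $\vx^\dagger$ in the proximal program gives $w(\vx^\dagger) + \ell\|\vx^\dagger - \hat{\vx}\|_2^2 \leq w(\hat{\vx})$. Since $\vx^\dagger - \hat{\vx} \in T_\calX(\hat{\vx})$ by convexity of $\calX$, these combine to $\ell\|\vx^\dagger - \hat{\vx}\|_2^2 \leq \epsilon\|\vx^\dagger - \hat{\vx}\|_2$, so $\|\vx^\dagger - \hat{\vx}\|_2 \leq \epsilon/\ell$, completing the proof.

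The main obstacle is justifying the identity $\partial w(\hat{\vx}) = \partial \phi(\hat{\vx})$, which is specific to the ``diagonal'' point $\hat{\vx}$ around which $w$ is constructed and fails at generic $\vx\in\calX$; everything else is standard proximal calculus once the correct subgradient has been extracted from the optimality of $\hat{\vx}$.
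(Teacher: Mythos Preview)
Your proof is correct and follows essentially the same route as the paper: identify $\partial w(\hat{\vx}) = \partial \phi(\hat{\vx})$ at the diagonal point, extract a subgradient $g$ with $\langle g,\vec{\delta}\rangle \ge -\epsilon\|\vec{\delta}\|_2$ for tangent directions, and use it to bound the distance from $\hat{\vx}$ to its $w$-proximal point. Two minor differences are worth noting: (i) you derive the subgradient inequality directly from the first-order optimality condition of $\hat{\vx}$ combined with \Cref{lemma:close-prox}, whereas the paper invokes \citep[Lemma~3.8]{lin2020gradient} to obtain the same $\vec{\xi}$; and (ii) you observe that $w$ is in fact \emph{convex} (as a max of affine functions), which lets you use the subgradient inequality $w(\vx^\dagger)\ge w(\hat{\vx})+\langle g,\vx^\dagger-\hat{\vx}\rangle$ directly, while the paper only invokes $\ell$-weak convexity via \Cref{lemma:weaklyconvex} and consequently carries an extra $\tfrac{1}{2}$ in the quadratic term. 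Your route is slightly cleaner on the constants, but the argument is otherwise the same.
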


\begin{proof}
Since $\norm{\nabla \phi_{1/(2\ell)} (\vx^*)}_2  \leq  \epsilon $,
 we get from~\Cref{lemma:close-prox} that  the proximal point $\hat{\vx}$ of $\vx^*$ satisfies $\norm{\hat{\vx}-\vx^*}_2 \leq \frac{\epsilon}{2\ell}$. Also, there exists a $\vec{\xi} \in\partial \phi (\hat{\vx})$ so that
 $\norm{\vec{\xi}}_2 \leq \epsilon.$
\textcolor{black}{By Danskin's theorem (Fact~\ref{fact:Danskin}),} it follows that $\partial w (\hat{\vx})$ coincides with 
$$\left. \partial \big(\max_y\sum_{i=1}^n U(\vx_i, \hat{\vx}_{-i},\vy)\big) \right|_{\vx=\hat{\vx}} =  \left. \partial \big( \max_y U( {\vx},\vy)\big)\right|_{\vx=\hat{\vx} }.$$ 
As such, there exists a $\vec{\xi} \in\partial w (\hat{\vx})$ so that 
\begin{equation}
\norm{\vec{\xi}}_2\leq \epsilon. \tag{*}
\label{eq:additional}
\end{equation}

We define $\hat{\vx}'\in\calX$ to be the proximal point of $\hat{\vx}$ with respect to $w_{1/(2\ell)}$. Hence, we have
 \begin{align}
     w(\hat{\vx}) &\geq  w_{1/(2\ell)}(\hat{\vx}')  \label{eq:from-moreau-property} \\ 
     &= w(\hat{\vx}') + \ell \| \hat{\vx}' - \hat{\vx} \|_2^2 \label{eq:from-moreau-def} \\
     &\geq \left( w(\hat{\vx}) +  \ell \| \hat{\vx} - \hat{\vx} \|_2^2\right)+ \langle \vec{v}, \hat{\vx}' - \hat{\vx} \rangle + \frac{\ell}{2} \| \hat{\vx}' - \hat{\vx} \|_2^2,  \label{eq:from-strong-convexity}\\
     &= w(\hat{\vx}) + \langle \vec{v}, \hat{\vx}' - \hat{\vx}\rangle + \frac{\ell}{2} \|\hat{\vx}' - \hat{\vx}\|_2^2
     \label{eq:the-conclusion},
 \end{align}
 $~\forall\vec{v} \in \partial \left(w(\vx) + \ell \|\vx - \hat{\vx} \|_2^2 + r(\vx) \right) \Big|_{\vx=\hat{\vx}}.$
 %
\eqref{eq:from-moreau-property} derives from the definition of the Moreau envelope~\citep[Lemma A.1]{lin2020gradient}; \eqref{eq:from-moreau-def} also follows directly from the definition of the Moreau envelope; and \eqref{eq:from-strong-convexity} follows from the fact that $w(\vx)  +\ell \|\vx - \hat{\vx} \|^2+r(\vx)$ \footnote{\textcolor{black}{Recall that $r(\vx) = \infty$ if $\vx\notin\calX$ and zero otherwise. Moreover $\partial r(\vx)$ is $N_{\calX}(\vx)$ if $\vx\in\calX$ and $\vec{0}$ otherwise.}} is $\ell$-strongly convex. Continuing from~\eqref{eq:the-conclusion},
 \begin{align}
     w(\hat{\vx}) \geq w(\hat{\vx}) + \langle \vec{v}, \hat{\vx}'-\hat{\vx}\rangle + \frac{\ell}{2}\| \hat{\vx}' - \hat{\vx}\|_2^2 \iff 
    0 \geq \langle \vec{v}, \hat{\vx}'-\hat{\vx}\rangle + \frac{\ell}{2}\| \hat{\vx}' - \hat{\vx}\|_2^{2}. \label{eq:ineq-for-all-subgrads}
 \end{align}
 Further, it holds that
 $$ \partial \left(w(\vx) +\ell \|\vx - \hat{\vx} \|_2^2 \right) \Bigg|_{\vx=\hat{\vx}} = \partial w(\vx)|_{\vx=\hat{\vx}}.$$
\textcolor{black}{Recall from~\eqref{eq:additional}, there exists $\vec{\xi} \in \partial w(\hat{\vx}): \norm{\vec{\xi}}_2\leq \epsilon$. 
 Thus, replacing $\vec{v}$ with $\vec{\xi}$ in~\eqref{eq:ineq-for-all-subgrads}, and using the latter inequality, we get}
 
    \begin{align*}
        0 &\geq \langle \vec{\xi}, \hat{\vx}' - \hat{\vx} \rangle + \frac{\ell}{2} \| \hat{\vx}' - \hat{\vx} \|_2^2 
        \geq  - \epsilon \| \hat{\vx}' - \hat{\vx} \|_2 + \frac{\ell}{2} \| \hat{\vx}' - \hat{\vx} \|_2^2,
    \end{align*}        
    finally implying that
 $$0 \leq \| \hat{\vx}' - \hat{\vx} \|_2 \leq \frac{ \textcolor{black}{2 \epsilon}}{\ell}.$$
 Concluding, \textcolor{black}{by~\Cref{lemma:close-prox},}
 $$ \textcolor{black}{\norm{  \nabla w_{1/(2\ell)}(\hat{\vx}) }_2 = 2\ell\| \hat{\vx}' - \hat{\vx} \|_2 \leq \textcolor{black}{4 \epsilon}.}$$

\end{proof}

\end{document}